\define\cal{\mathcal}
\define\gl{\lambda}
\theoremstyle{plain} 
\newtheorem{theorem}{\indent\sc Theorem}[section]
\newtheorem{lemma}[theorem]{\indent\sc Lemma}
\newtheorem{proposition}[theorem]{\indent\sc Proposition}
\theoremstyle{definition} 
\newtheorem{definition}[theorem]{\indent\sc Definition}
\newtheorem{remark}[theorem]{\indent\sc Remark}
\newtheorem{example}[theorem]{\indent\sc Example}
\def\address#1#2{\begingroup
\noindent\parbox[t]{7.8cm}{%
\small{\scshape\ignorespaces#1}\par\vskip1ex
\noindent\small{\itshape E-mail address}%
\/: #2\par\vskip4ex}\hfill%
\endgroup}%
\title{\uppercase{On eigenproblem for inverted harmonic oscillators}} 
\author{
%
\bigskip \\
\textsc{ Piotr Kraso{\'n}, Jan Milewski } 
}
\date{} 
\begin{document}
\renewcommand{\theequation}{\thesection.\arabic{equation}}

\maketitle

\footnote{ 
2000 \textit{Mathematics Subject Classification}.
Primary 34L10; Secondary 33D15 ; 46F05 .
}
\footnote{ 
\textit{Key words and phrases}. 
inverted harmonic oscillator, rigged Hilbert space, generalized eigenvalue problem, differential operator.
}


\begin{abstract}
\noindent
 We consider an eigenvalue problem for an inverted one dimensional harmonic oscillator.    We find a complete description  for the eigenproblem in $C^{\infty}(\mathbb R)$. The eigenfunctions are described in terms of the confluent hypergeometric functions, the spectrum is ${\mathbb C}$. The spectrum of the differential operator $-{\frac{d}{dx^2}}-{\omega}^{2}{x^2}$ is continuous and has physical significance  only  for the states which are in $L^{2}(\mathbb R)$ and correspond to real eigenvalues.  To identify them  we  orthonormalize in Dirac sense the states corresponding to real eigenvalues. This leads to the doubly degenerated real line as the spectrum of the Hamiltonian (in  $L^2({\mathbb R})$). We also use
    two other  approaches.  First we define a unitary operator between $L^{2}(\mathbb R)$ and $L^{2}$ for two copies of $\mathbb R$. This operator has the property that the spectrum of the image of the inverted harmonic oscillator corresponds to the spectrum of the operator $-i{\frac{d}{dx}}$. This shows again that the (generalized) spectrum of the inverted harmonic operator is a doubly degenerated real line. The second approach uses rigged Hilbert spaces. 
\end{abstract}

\section{Introduction} 

In quantum mechanics one of the most important models is that of a harmonic oscillator. This is given by the following hamiltonian:
\begin{equation}\label{equation 1.0}
-\frac{d^2}{dx^2} + {{\omega}^{2}  x^{2}}
\end{equation} 
The quantum states are eigenfunctions of the eigenvalue problem:
\begin{equation}\label{equation 1.1}
-\frac{d^2}{dx^2}{\phi} + {{\omega}^{2}  x^{2}}{\phi}={\lambda}{\phi}
\end{equation}
The differential operator (\ref{equation 1.0}) is unbounded, symmetric and positive definite therefore the spectrum is discrete and the eigenvalues of (\ref{equation 1.1}) are non-negative real numbers which correspond to the quantized energy levels (cf. \cite{La}). Being unbounded the operator (\ref{equation 1.0}) is not defined on the whole $L^{(2)}(\mathbb R)$ but on its dense subset ${\cal S}(\mathbb R)$ of rapidly decreasing functions (cf. \cite{SR} Example 2 p. 250). 
By $L^2({\mathbb R})$ (resp. ${\cal S}(\mathbb R)$) we denote, for shorthand, $L^2({\mathbb R}, {\mathbb C})$ (resp. ${\cal S}(\mathbb R, {\mathbb C})$ ) - the space of square integrable (resp. rapidly decreasing) functions from ${\mathbb R}$ into ${\mathbb C}.$
Moreover the operator (\ref{equation 1.0})  as every symmetric operator is closable. One might also show that it is essentially self-adjoint (cf. \cite{SR}).

The inverted harmonic oscillator in stationary states is described by the following operator:
\begin{equation}\label{equation 1.2}
-\frac{d^2}{dx^2} - {{\omega}^{2}  x^{2}}
\end{equation}
The operator here is of course symmetric and therefore closable  but not positive definite.  We solve  the eigenproblem directly for the operator (\ref{equation 1.2}) in the space $C^{\infty}(\mathbb R).$ The spectrum in $C^{\infty}(\mathbb R)$ is continuous and equal to ${\mathbb C}$. The authors oppose, for physical reasons, the  treatment which involves complex (non-real) eigenvalues.

\begin{remark}
Recall that for the attractive oscillator (\ref{equation 1.0}) the eigenfunctions are given by means of Hermite polynomials
\begin{equation}\label{eigen1}
{\psi}_n(x:{\omega})=A_ne^{-{\omega}x^2/2}H_n(\sqrt{\omega}x), \qquad {\lambda}_n=(2n+1)\omega \quad \omega >0
\end{equation}
where $A_n(\omega)=\frac{1}{\sqrt{2^nn!}}\left(\frac{\omega}{\pi}\right)^{\frac{1}{4}}$ is a normalizing constant: $<{\psi}_n, {\psi}_m >={\delta}_{n,m}.$ Naturally, the linear combination ${\Sigma}_{n}a_n{\psi}_n(x; \omega)$ is a square integrable function  for 
${\Sigma}_{n}|a_n|^2<\infty .$
A change of $\omega$ into $i{\omega}$ leads to unbounded eigenvectors ${\psi}_{n}(x; i\omega)$, $||{\psi}_{n}(x; i\omega)||=\infty $ with the discrete purely imaginary spectrum ${\lambda}_n=(2n+1)i\omega$. 
Therefore any non-trivial  linear combination does not yield a square integrable function. These states cannot be viewed as 
quantum states. From physical point of view, the Hamiltonian of this system is the restriction of ({\ref{equation 1.2}}) to 
$L^2({\mathbb R}).$
\bigskip
\end{remark}


In general,  any hermitian operator has real spectrum and the functions  with the non-real eigenvalues cannot be interpreted as describing quantum states. 
 One should look for eigenstates in $L^{2}(\mathbb R)$ and in a real part of a spectrum. The differential operator (\ref{equation 1.2}) is not defined on the whole $L^{2}({\mathbb R})$ but on its dense subspace  $D$. The spectrum depends of course on the choice of the domain. 
 We identify the real spectrum on $D$ by orthonormalizing the real states (cf. Proposition \ref{thm. 3.4}) in the Dirac sense.
 This uses fairly complicated integral transform.
 
 To identify the real spectrum of the closure of the operator (\ref{equation 1.2}) on $L^{2}(\mathbb R)$, in a different way, we define two unitary operators $W_{A}:L^2(\mathbb R)\rightarrow L^2(\mathbb R)$ 
(cf. ({\ref{opcal}}) and (\ref{jadrcalk})) and $U_{\exp}:L^{2}(\mathbb R) \rightarrow L^{2}({\mathbb R}_{+1}\amalg {\mathbb R}_{-1}),$ where ${\mathbb R}_{+1}\amalg {\mathbb R}_{-1}$ is the disjoint union of two copies of ${\mathbb R}.$ We prove that the operator (\ref{equation 1.2}) on $L^{2}({\mathbb R})$ corresponds to the operator $-i{\frac{d}{dx}}$ on $L^{2}({\mathbb R}_{+1}\amalg {\mathbb R}_{-1}).$ As the spectrum of the latter is real we see that the spectrum of (\ref{equation 1.2}) is real on the appropriate domain $D=W_{A}^{-1}U_{\exp}^{-1}(D_{1})$ where $D_{1}={\cal S}({\mathbb R}_{+1})\amalg {\cal S}({\mathbb R}_{-1})$ is the domain of the Fourier transform. In view of Theorem X.1 of \cite{SR1} this also gives an indirect proof that the closure of operator (\ref{equation 1.2}) is essentially self-adjoint.
We believe that our approach with unitary operators $W_A$
and $U_{\mathrm{exp}}$ is new, although in \cite{V} (cf. \cite{Mi} [) an alternative procedure is suggested. This
relies on the unitary transformation of the operator $x \frac{d}{dx}+\frac{1}{2}$ into the operator $\frac{d}{dx^2}+ \frac{x^2 }{4}$
Since the resolution of the first operator is known \cite{V} one obtains the spectrum for the latter.

The rigged Hilbert space approach \cite{Ge} is common for essentially self-adjoint operators and  was used in \cite{Ma1} where a  one-dimensional system with a  rectangular barrier potential was considered. The main challenge in applying the  theory of rigged Hilbert spaces is to find the appropriate  dense topological linear subspace ${\Phi}$ (cf. section \ref{section 2}). We define the rigged Hilbert space for the generalized eigenproblem for (\ref{equation 1.2}) and prove that the generalized spectrum is real. The naturality of our choice is justified by Lemmas \ref{ulemma} and \ref{Wlemma} (cf. Remark \ref{last}).

\section{ Eigenproblem for the differential operator}
In this section we analyze the eigenproblem for the operator (\ref{equation 1.2}).   We will describe  eigenvectors corresponding to an eigenvalue $\lambda$. These are solutions of the following equation:
\begin{equation}\label{eq. 3.1}
( -\frac{d^2}{d x^2}-{\omega}^2 x^2 ) \psi (x)= \lambda \psi  (x). 
\end{equation}
At first let us consider the eigenproblem (\ref{eq. 3.1}) in $C^{\infty}({\mathbb R}).$

Let 
\begin{equation}\label{eq. 3.2}
_{1}F_{1}(a,b,z)=1+{\frac{az}{b}}+{\frac{(a)_{2}z^{2}}{(b)_{2}2!}}+\dots + {\frac{(a)_{n}z^{n}}{(b)_{n}n!}}+\dots
\end{equation}
where
\begin{equation}\label{eq. 3.3}
(a)_{n}=a(a+1)(a+2)\dots (a+n-1)
\end{equation}
be a Kummer function (confluent hypergeometric function of type (1,1)\,) (cf. \cite{Ab}, \cite{Bu}, \cite{WW}.)
Define a Fresnel factor:
\begin{equation}\label{eq. 3.4}
 f_{\alpha} (x)= \exp ( i {\alpha} x ^2 /2)
\end{equation}
\begin{theorem}\label{thm. 3.5}
The spectrum of the operator (\ref{equation 1.2}) is continuous and equals $\mathbb C$. For an eigenvalue ${\lambda}\in {\mathbb C}$ the corresponding generalized linearly independent eigenvectors may be  given as:
\begin{equation}\label{eq. 3.6}
{\psi}_{P}(x)=f_{\alpha} (x) { _1 F  _1} \left( \nu +\frac{1}{4}; \frac{1}{2}; -i {\alpha} x^2 \right) 
\end{equation}
and
\begin{equation}\label{eq. 3.7}
{\psi}_{N}(x)=f_{\alpha} (x)x{ \, _1 F  _1} \left( \nu +\frac{3}{4}; \frac{3}{2} ; -i {\alpha} x^2 \right)
\end{equation}
where ${\alpha}=\pm {\omega}$ and $\nu =\frac{ \gl}{4 i \alpha}.$
\end{theorem}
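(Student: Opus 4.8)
The plan is to reduce \eqref{eq. 3.1} to Kummer's confluent hypergeometric equation by stripping off the Fresnel factor $f_\alpha$, and then to check that the resulting solutions are smooth on all of $\mathbb R$, linearly independent for every $\lambda\in\mathbb C$, and never square-integrable. \emph{Step 1: removing the quadratic potential.} Write $\psi = f_\alpha g$. Since $f_\alpha' = i\alpha x f_\alpha$ and $f_\alpha'' = (i\alpha - \alpha^2 x^2)f_\alpha$, one gets $\psi'' = f_\alpha\bigl(g'' + 2i\alpha x\,g' + (i\alpha-\alpha^2 x^2)g\bigr)$, so \eqref{eq. 3.1} becomes
\[
g'' + 2i\alpha x\,g' + \bigl(i\alpha + \lambda + (\omega^2-\alpha^2)x^2\bigr)g = 0 .
\]
Choosing $\alpha = \pm\omega$ annihilates the coefficient of $x^2$ and leaves the simpler equation $g'' + 2i\alpha x\,g' + (i\alpha+\lambda)g = 0$.

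\emph{Step 2: reduction to Kummer's equation.} Substituting $z = -i\alpha x^2$ transforms this (on each of the half-lines $x>0$ and $x<0$) into
\[
z\,\ddot g + \Bigl(\tfrac12 - z\Bigr)\dot g - \Bigl(\nu+\tfrac14\Bigr)g = 0, \qquad \nu = \frac{\lambda}{4i\alpha},
\]
which is Kummer's equation with parameters $a=\nu+\tfrac14$, $b=\tfrac12$. Because $b=\tfrac12\notin\mathbb Z$, a fundamental system of solutions is ${}_1F_1\!\left(\nu+\tfrac14;\tfrac12;z\right)$ and $z^{1/2}\,{}_1F_1\!\left(\nu+\tfrac34;\tfrac32;z\right)$. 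Passing back through $z=-i\alpha x^2$ and multiplying by $f_\alpha$ produces precisely $\psi_P$ and $\psi_N$ of \eqref{eq. 3.6}--\eqref{eq. 3.7}, the factor $z^{1/2}$ being, up to an irrelevant constant, the factor $x$ appearing in $\psi_N$.

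\emph{Step 3: smoothness, independence, and the shape of the spectrum.} The series ${}_1F_1$ is entire in its argument, hence $\psi_P,\psi_N\in C^\infty(\mathbb R)$ for every $\lambda\in\mathbb C$; I would also substitute $\psi_P,\psi_N$ back into \eqref{eq. 3.1} directly, which sidesteps the fact that the change of variable $z=-i\alpha x^2$ is two-to-one and singular at the origin. Linear independence is read off from the Wronskian at $x=0$: one has $\psi_P(0)=1$, $\psi_P'(0)=0$, while $\psi_N(0)=0$, $\psi_N'(0)\neq 0$, so $W(\psi_P,\psi_N)(0)\neq 0$. Since no boundary or decay condition is imposed in $C^\infty(\mathbb R)$, every $\lambda\in\mathbb C$ admits a two-dimensional space of smooth solutions, so the generalized spectrum is all of $\mathbb C$. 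That it is purely \emph{continuous} — i.e. contains no honest eigenvalues — follows from the large-argument asymptotics ${}_1F_1(a;b;z)\sim \mathrm{const}\cdot e^{z}z^{a-b}+\mathrm{const}\cdot(-z)^{-a}$: for $z=-i\alpha x^2$ purely imaginary, $|\psi_P(x)|$ and $|\psi_N(x)|$ either grow polynomially or decay too slowly as $|x|\to\infty$, so none of these generalized eigenvectors lies in $L^2(\mathbb R)$.

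The algebra of Steps 1 and 2 is routine; the step I expect to require the most care is Step 3 — justifying that the formal passage through the singular substitution genuinely yields $C^\infty(\mathbb R)$ solutions and a two-dimensional solution space (in particular that the two sign choices $\alpha=\pm\omega$ do not enlarge it, which is guaranteed by Kummer's transformation ${}_1F_1(a;b;z)=e^{z}\,{}_1F_1(b-a;b;-z)$), together with making precise the sense in which the spectrum is called "continuous" here, as preparation for the rigged Hilbert space discussion in the later sections.
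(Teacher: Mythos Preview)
Your proposal is correct and follows essentially the same route as the paper: factor out the Fresnel factor $f_\alpha$ to reduce \eqref{eq. 3.1} to a Hermite-type equation, choose $\alpha=\pm\omega$ to kill the quadratic term, and identify the even/odd solutions as confluent hypergeometric functions. The paper's proof is terser --- it simply states the resulting equation \eqref{eq. 3.9} and says the solutions \eqref{eq. 3.10}--\eqref{eq. 3.11} ``can be verified by a direct computation using appropriate expansions'' --- whereas you carry out the explicit substitution $z=-i\alpha x^2$ to land on Kummer's equation and add the Wronskian/asymptotic arguments for independence and continuity of the spectrum; these additions are sound and in fact supply justification the paper leaves implicit (the redundancy of the two sign choices $\alpha=\pm\omega$ via Kummer's first formula is exactly the content of the paper's Remark~\ref{rem. 3.2}).
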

\begin{proof}
Assume that 
\begin{equation}\label{eq. 3.8}
\psi (x)= f_{\alpha} (x) F (x)
\end{equation}
Substituting (\ref{eq. 3.8}) into (\ref{eq. 3.1}) and taking into account that
\begin{equation}\label{eq. 3.81}
f_{\alpha} '(x)= i {\alpha} x f_{\alpha} (x), \quad f_{\alpha} ''(x)=(i{\alpha} -{\alpha} ^2 x^2 ) f_{\alpha} (x)
\end{equation}
 one obtains  the following equation for $F(x):$
\begin{equation}\label{eq. 3.9}
 F\,''(x)=-2i{\alpha} x F\,'(x)+\left[\left({\alpha} ^2-{\omega} ^2 \right)x^2 -i{\alpha} -{\lambda} \right] F (x)
\end{equation}

Notice that for ${\alpha}=\pm {\omega}$ equation (\ref{eq. 3.9}) is of Hermite type. The general solution of (\ref{eq. 3.9}) for ${\alpha}=\pm {\omega}$ is given by the even solution:
\begin{equation}\label{eq. 3.10}
F_{P,{\alpha},\lambda}(x)= _1 F  _{1} \left( \nu +\frac{1}{4}; \frac{1}{2}; -i {\alpha} x^2 \right)
\end{equation}
and the odd solution:
\begin{equation}\label{eq. 3.11}
F_{N,{\alpha}, \lambda}(x)=x{ \, _1 F  _{1} \left( \nu +\frac{3}{4}; \frac{3}{2} ; -i {\alpha} x^2 \right)}
\end{equation}
and $\nu =\frac{ \gl}{4 i \alpha}.$
This can be verified by a direct computation using appropriate expansions (\ref{eq. 3.2}). 
\end{proof}
\begin{remark}\label{rem. 3.2}
Note that because of the uniqueness of the solution for  equation (\ref{eq. 3.1}) we have:
\begin{equation}\label{eq. 3.2a}
 f_{-{\omega}}(x)F_{P,{-\omega},{\lambda}}(x)=f_{{\omega}}(x)F_{P,{\omega},{\lambda}}(x)
\end{equation}
\begin{equation}\label{eq. 3.2b}
 f_{{-\omega}}(x)F_{N,{-\omega},{\lambda}}(x)= f_{{\omega}}(x)F_{N,{\omega},{\lambda}}(x)
\end{equation}
which are  also clear consequences of  Kummer's first formula (cf. \cite{Bu} p.6. formula 12).
\end{remark}

From now on we consider the Hamilton operator as the restriction of the differential operator to $L^2(\mathbb R).$

We have the following:
\begin{proposition}\label{thm. 3.4}
Let
\begin{equation}\label{eq. 3.13}
 F_a (x)=\frac{1}{\sqrt{x}}(A (a)e^{(ig_a(x))}+ \overline{A (a)}e^{(-i g_a(x))}) ,
\end{equation}
where 
 $$g_a (x)=\frac{{\alpha} x^2}{2}+\frac{a}{2{\alpha}}ln(x).$$
 Further, let ${Wr (\overline{F_b} , F_a)}$ be a Wronskian of $\overline{F_b}$ and $ F_a. $ Then for $a,b \in{\mathbb R}$ we have:
\begin{equation}\label{eq. 3.15}
\lim_{x\rightarrow \infty }\frac{Wr (\overline{F_b} , F_a)}{a-b}  =4\pi \alpha |A(a)|^2\delta (b-a).
\end{equation} 
where ${\delta}$ denotes the standard Dirac ${\delta}$-distribution on a real line.
\end{proposition}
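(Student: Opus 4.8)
The plan is to compute the Wronskian $Wr(\overline{F_b},F_a)(x) = \overline{F_b}(x)F_a'(x) - \overline{F_b}'(x)F_a(x)$ explicitly from the closed form \eqref{eq. 3.13}, keep only the terms that survive in the limit $x\to\infty$, and recognize the resulting oscillatory integral kernel as a representation of the Dirac delta. First I would substitute $F_a(x)=x^{-1/2}(A(a)e^{ig_a(x)}+\overline{A(a)}e^{-ig_a(x)})$ and the analogous expression for $\overline{F_b}$, and differentiate. Since $g_a'(x) = \alpha x + \tfrac{a}{2\alpha x}$, the derivative of the $x^{-1/2}$ prefactor contributes a term of order $x^{-3/2}\cdot x^{-1/2}\sim x^{-2}$ in the Wronskian, which vanishes in the limit; the dominant contribution comes from differentiating the exponentials, giving a factor $ig_a'(x)\sim i\alpha x$. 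Collecting terms, the $x^{-1/2}\cdot x^{-1/2}=x^{-1}$ prefactors combine with the $x$ from $g_a'$ to give an overall factor independent of $x$ in magnitude, multiplying four exponential cross terms $e^{\pm i(g_a\pm g_b)}$.

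The key observation is that two of these four cross terms are "non-stationary": $g_a(x)+g_b(x)=\alpha x^2 + \tfrac{a+b}{2\alpha}\ln x$ grows without bound, so $e^{\pm i(g_a+g_b)}$ oscillates ever more rapidly and contributes nothing in the distributional limit (Riemann–Lebesgue, tested against smooth compactly supported functions of $b$). The surviving terms involve $g_a(x)-g_b(x) = \tfrac{a-b}{2\alpha}\ln x$, so after dividing by $a-b$ one is left with something proportional to $\alpha\,|A(a)|^2\cdot\frac{1}{a-b}\bigl(e^{i\frac{a-b}{2\alpha}\ln x}-e^{-i\frac{a-b}{2\alpha}\ln x}\bigr)$, i.e. proportional to $\alpha|A(a)|^2\,\frac{\sin\!\left(\frac{(a-b)}{2\alpha}\ln x\right)}{a-b}$ up to the constant factors coming from the $i\alpha$ and the $2$ in $g_a'$. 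Writing $T=\tfrac{\ln x}{2\alpha}$, this is $\alpha|A(a)|^2\cdot 2\cdot\frac{\sin((a-b)T)}{a-b}$, and as $x\to\infty$ we have $T\to\infty$, so by the standard Dirichlet-kernel representation $\frac{\sin(\mu T)}{\mu}\to \pi\delta(\mu)$ one obtains $\lim_{x\to\infty} = 4\pi\alpha|A(a)|^2\delta(a-b)=4\pi\alpha|A(a)|^2\delta(b-a)$, which is exactly \eqref{eq. 3.15}. (One should be a little careful with the sign of $\alpha$ and the orientation of the limit, but $\delta$ is even so the sign of $\alpha$ only enters through the explicit factor.)

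The main obstacle, and the step deserving the most care, is the rigorous justification that the limit is to be read distributionally in $b$ and that the non-stationary cross terms genuinely drop out: $F_a$ is not an $L^2$ function, so \eqref{eq. 3.15} is an identity of distributions (in $b$, near $b=a$), and one must specify the class of test functions — smooth, compactly supported in a neighbourhood of $a$ — and then invoke Riemann–Lebesgue for the $e^{\pm i(\alpha x^2 + \frac{a+b}{2\alpha}\ln x)}$ terms and the Dirichlet-kernel limit for the surviving terms. A secondary point is handling the subleading pieces (the $x^{-2}$ terms from differentiating $x^{-1/2}$, and the $\tfrac{a}{2\alpha x}$ part of $g_a'$, which produces $x^{-2}$ contributions as well): these must be shown to vanish uniformly after the division by $a-b$, which is immediate since they carry an extra power of $x^{-1}$ and no compensating singularity at $b=a$. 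Once the bookkeeping of which terms survive is settled, the identification with $\delta$ is the classical computation.
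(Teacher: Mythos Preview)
Your approach is essentially the paper's: expand the Wronskian bilinearly into the four cross terms $W_{s_1,s_2}$, discard those carrying the phase $e^{\pm i(g_a+g_b)}$, and recognize the surviving combination as a Dirichlet kernel tending to $\pi\delta(a-b)$. The paper organizes the computation via the identity $Wr(fg,fh)=f^{2}Wr(g,h)$ (equation (\ref{eq. 3.21})) to strip off the common factor $x^{-1/2}$ and then writes each $W_{s_1,s_2}$ exactly, but the mechanism is the one you describe.

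One point in your sketch needs correction. You estimate all four cross terms as having amplitude of order $1$ (from $x^{-1}\cdot g_a'(x)\sim\alpha$) and then appeal to Riemann--Lebesgue in the variable $b$ to dispose of the $e^{\pm i(g_a+g_b)}$ terms. After the division by $(a-b)$, however, that would leave a non-integrable $1/(a-b)$ factor in front of the oscillatory exponential, and Riemann--Lebesgue does not apply against such a singular weight. What actually happens---and what the paper's exact formula (\ref{eq. 3.23}) records---is that for the same-sign pairings the leading $\alpha x$ parts of $g_a'$ and $g_b'$ \emph{cancel} in the Wronskian, leaving
\[
W_{+,+}=i\,\frac{b-a}{2\alpha x^{2}}\exp\!\Bigl(i\alpha x^{2}+i\tfrac{a+b}{2\alpha}\ln x\Bigr).
\]
Thus these terms already carry a factor $(b-a)$; after dividing by $(a-b)$ they are $O(x^{-2})$ and vanish pointwise, with no oscillatory lemma required. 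Once you perform the derivative exactly rather than by order of magnitude, this cancellation appears and the rest of your argument (the Dirichlet-kernel limit for $W_{-,+}+W_{+,-}$) goes through as written.
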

Notice that the assumption that $a$ and $b$ are real is important because of the structure of equality (\ref{eq. 3.15}).
\begin{proof}
Let us denote
\begin{equation}\label{eq. 3.20}
 W_{s_1,s_2}:=Wr \left(\frac{1}{\sqrt{x}}\exp\left(s_1 i f_b(x)\right),
\frac{1}{\sqrt{x}}\exp\left(s_2if_a(x)\right)\right),
\end{equation}
where $s_{1,2}=\pm$.

Wronskian satisfies the following identity:
\begin{equation}\label{eq. 3.21}
 Wr (fg,fh)=f^2 Wr(g,h) .
\end{equation}
Hence
\begin{equation}\label{eq. 3.22}
  W_{-,+}=- \left(2i\alpha +i\frac{b+a}{2\alpha x^2}\right)\exp\left(i\frac{a-b}{2\alpha}\ln x\right)
\end{equation}
and
\begin{equation}\label{eq. 3.23}
 W_{+,+}= i\frac{b-a}{2\alpha x^2}\exp\left(i\alpha x^2+i\frac{a+b}{2\alpha }\ln x\right).
\end{equation}
Moreover
\begin{equation}\label{eq. 3.24}
 W_{-s_1,-s_2}=\overline{W_{s_1,s_2}} .
\end{equation}
Since the Wronskian is $2$-linear we have:
\begin{equation}\label{eq. 3.25}
 Wr (\overline{F_b} , F_a)=
 \overline{A (b)}A (a)W_{-,+}+\overline{A (b)}\overline{A (a)}W_{-,-}+ 
\end{equation}
 $$ A (b)A (a)W_{+,+}+A (b)\overline {A (a)}W_{+,-}.$$
Dividing (\ref{eq. 3.25}) by $(a-b)$ and passing to the limit as $x\rightarrow\infty$ we obtain (\ref{eq. 3.15}). Here we treat both sides of (\ref{eq. 3.25}) as distributions.  
\end{proof}
The Kummer function has the following asymptotic expansion for $|z|>>0$ (cf. \cite{Ab} formulas 13.1.4 and 13.1.5   ):
\begin{equation}\label{eq. 3.16}
{ _1 F  _1} (a,b,z)=\frac{\Gamma (b)}{\Gamma (b-a)} (-z)^{-a}+\frac{\Gamma (b)}{\Gamma (a)}e^z (z)^{a-b}+O(\frac{1}{z})
\end{equation} 

Therefore for $x>>0$ we have:

\begin{equation}\label{eq. 3.17}
F_{(P,N),\alpha, \lambda } (x) \sim A_{P,N}(\alpha ,\lambda)\frac{1}{\sqrt{x}}\exp(\frac{i\alpha  x^2}{2}+\frac{i\lambda }{2\alpha}\ln x) +
\end{equation}
$$
\overline{ A_{P,N}(\alpha ,\lambda )}\frac{1}{\sqrt{x}}\exp(-\frac{i\alpha  x^2}{2}-\frac{i\lambda }{2\alpha}\ln x) 
$$
where 
\begin{equation}\label{eq. 3.18}
 A_P=\frac{\Gamma(\frac{1}{2})}{\Gamma(\frac{1}{4}+\frac{i\lambda }{4\alpha})}(i\alpha)^{-\frac{1}{4}+\frac{i\lambda}{4\alpha}}, 
\end{equation}
\begin{equation}\label{eq. 3.19}
 A_N=\frac{\Gamma(\frac{3}{2})}{\Gamma(\frac{3}{4}+\frac{i\lambda}{4\alpha})}(i\alpha)^{-\frac{3}{4}+\frac{i\lambda}{4\alpha}}, 
\end{equation}
and $F_{(P,N),{\alpha}, \lambda }$ is either $F_{P,{\alpha}, \lambda }$ or $F_{N,{\alpha}, \lambda}$ (cf. (\ref{eq. 3.10}) and (\ref{eq. 3.11})).

Let ${\phi}_1, {\phi}_2$ be the eigenfunctions of the differential operator $-{\frac{d}{dx^2}+U(x)}$ where $U(x)$ is a real valued function with real eigenvalues ${\lambda}_1$ and ${\lambda}_2.$ 
The following formula allows one to orthonormalize states  for continuous spectra in the Dirac sense. (Analogously for discrete spectrum in the Kronecker sense)
\begin{equation}\label{wron1}
<{\phi}_1, {\phi}_2> = {\int}_{-\infty}^{\infty}{\bar{{\phi}_1}}{{\phi}_2}dx={\frac{1}{{\lambda}_1 - {\lambda}_2}}Wr(\bar{ {\phi}_1}, {\phi}_2)|_{-\infty}^{+\infty}
\end{equation}
 Notice that in our case Proposition \ref{thm. 3.4} the asymptotic form of the Kummer function (\ref{eq. 3.17}) and the equality (\ref{wron1}) allows one to orthonormalize the states in the Dirac sense. 

Since the operator (\ref{equation 1.2}) is hermitian in $L^2({\mathbb R})$ we have to take the real part (subset of real eigenvalues) of the above, computed in $C^{\infty}({\mathbb R}),$ spectrum. By Proposition \ref{thm. 3.4} and (\ref{wron1}) we can orthonormalize in $L^2(\mathbb R)$  the states corresponding  to real eigenvalues and thus  we get doubly degenerated spectrum ${\mathbb R}.$
In order to illustrate this we define, in the next section, a unitary operator from $L^2({\mathbb R})$ into $L^2({\mathbb R}).$ The transformed operator (\ref{equation 1.2}) has the same real part of the spectrum as  (\ref{equation 1.2}). Notice that assumption in Proposition  (\ref{thm. 3.4}) that $a$ and $b$ are real is in agreement with the fact that we consider the operator (\ref{equation 1.2} ) on $L^2({\mathbb R})$ where it is hermitian.
\begin{remark}
With any real eigenvalue $\lambda$ of the Hamiltonian (\ref{equation 1.2}) there are associated two eigenstates: one is an even function and the second is an odd function (cf. (\ref{eq. 3.10}) and (\ref{eq. 3.11}) ).
\end{remark}

\section{Transformation of position and momentum operators}

 Position and momentum operators act on a function $f(x)$ in the Schr{\"o}dinger representation in the following way \cite{La}:
\begin{equation}\label{eq. 3a1}
 \hat x f (x)=x f(x), \quad \hat p_x f (x)=-i\frac{d}{dx} f (x) .
\end{equation}
One readily verifies that these operators fulfil the following commutation relation:
\begin{equation}\label{eq. 3a2}
[\hat p_x,  \hat x]=-i \,{\rm Id}
\end{equation} 
For a $2\times 2$-matrix:
\begin{equation}\label{eq. 3a3}
 A= \left[ \begin{array}{cc} a & b \\ \alpha & \beta \end{array} \right] 
\end{equation}
we define the following transformation of operators:
\begin{equation}\label{eq. 3a4} 
\hat u=a\hat x  +b\hat p_x, \quad \hat p_u=\alpha \hat x  +\beta \hat p_x .
\end{equation}
The operators (\ref{eq. 3a4}) fulfil the relation (\ref{eq. 3a2}) if and only if $A \in SL_{2}(\mathbb R).$ In this case one can view $\hat u$ and $\hat p_u$ as new operators of position and momentum. 
Now we will look for the unitary transformation $W_A$ which maps the Schr{\"o}dinger representation for  operators $\hat x$ and $ \hat p_x$ into that for new operators $\hat u$ and $ \hat p_u .$ In other words we would like $W_A$ to  fulfil the following equations:

\begin{equation} \label{transfxnau}
W_A(\hat u f) (u)=uW_A( f) (u), \quad W_A(\hat p_u f) (u)=-i \frac{d}{du} W_A( f) (u).
\end{equation}
Note that the transformation $W_{A}$ is not uniquely defined but only up to a complex constant $c$ of absolute value one. 
The following proposition  shows that the transformation $W_A$ is given by  an integral operator.
\begin{proposition}\label{prop}
Any unitary transformation $W_{A}$ is given by the formula:
\begin{equation}\label{opcal}
 ({W_A} f) (u)=\int_{\mathbb R} W (A; u,x) f(x) dx ,
\end{equation}
where  the integral kernel is of  the form
\begin{equation}\label{jadrcalk} 
W (A; u,x)= c(A)\exp \left[\frac{i}{2b} (ax^2-2ux+{\beta} u^2)\right] 
\end{equation}
and
\begin{equation}\label{coda}
 |c(A)|=\sqrt{\frac{1}{2{\pi}  b}}.
 \end{equation}
\end{proposition}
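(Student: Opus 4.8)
The plan is to regard $W_A$, which by (\ref{transfxnau}) intertwines two irreducible Schr\"odinger representations of the canonical commutation relation (\ref{eq. 3a2}), as an operator carrying an integral kernel, and then to determine that kernel from the two intertwining identities together with unitarity. First I would apply the Schwartz kernel theorem: a unitary $W_A$ on $L^2(\mathbb{R})$ restricts to a continuous map $\mathcal{S}(\mathbb{R})\to L^2(\mathbb{R})\subset\mathcal{S}'(\mathbb{R})$, hence is given by $(W_A f)(u)=\int_{\mathbb{R}}W(A;u,x)f(x)\,dx$ as in (\ref{opcal}) for some tempered distribution $W(A;\cdot,\cdot)$, and a posteriori one checks that this $W$ is a smooth bounded function, so the integral is literal. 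Throughout I assume $b\neq 0$; for $b=0$ one has $\hat u=a\hat x$ and $W_A$ is a dilation composed with a unimodular multiplier rather than an integral operator, so that degenerate case is excluded from the statement.

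Next I would convert the operator equations into differential equations for $W$. Applying the first identity in (\ref{transfxnau}) to $f\in\mathcal{S}(\mathbb{R})$ with $\hat u=a\hat x+b\hat p_x=a x-ib\,\tfrac{d}{dx}$, integrating by parts in $x$ (the boundary terms vanish because $|W(A;u,x)|$ is bounded in $x$), and matching kernels gives
\[
ib\,\partial_x W(A;u,x)=(u-ax)\,W(A;u,x),
\]
whence $W(A;u,x)=C(u)\exp\bigl[\tfrac{i}{2b}(ax^2-2ux)\bigr]$ for a function $C(u)$ still to be found.

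Feeding this into the second identity in (\ref{transfxnau}) with $\hat p_u=\alpha\hat x+\beta\hat p_x=\alpha x-i\beta\,\tfrac{d}{dx}$, integrating by parts again, substituting the expression just found for $\partial_x W$, and simplifying with $\det A=a\beta-b\alpha=1$ (which collapses the coefficient of $x$), one is left with $C'(u)/C(u)=i\beta u/b$, so $C(u)=c\exp\bigl[\tfrac{i\beta u^2}{2b}\bigr]$ with $c=c(A)$ a constant. Collecting the factors reproduces the kernel (\ref{jadrcalk}); since $c$ is the only remaining freedom, this is also the asserted uniqueness up to a unimodular constant noted before the proposition.

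Finally, to pin down $|c(A)|$ I would impose $W_A^{*}W_A=\mathrm{Id}$. Computing the kernel of $W_A^{*}W_A$ gives
\[
\int_{\mathbb{R}}\overline{W(A;u,x)}\,W(A;u,x')\,du
=|c(A)|^2\exp\bigl[\tfrac{ia((x')^2-x^2)}{2b}\bigr]\int_{\mathbb{R}}\exp\bigl[-\tfrac{i(x'-x)}{b}u\bigr]du,
\]
and the $u$-integral equals $2\pi|b|\,\delta(x'-x)$, on whose support the prefactor is $1$; hence $W_A^{*}W_A=\mathrm{Id}$ forces $|c(A)|^2\,2\pi|b|=1$, which is (\ref{coda}). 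A symmetric computation (or Schur's lemma applied to the intertwining of irreducible representations, with $W_A^{-1}$ itself an isometry) yields $W_A W_A^{*}=\mathrm{Id}$, so $W_A$ is genuinely unitary, and one checks directly that the operator with kernel (\ref{jadrcalk}) maps $\mathcal{S}(\mathbb{R})$ into $\mathcal{S}(\mathbb{R})$ (a Gaussian chirp, a Fourier transform, and another chirp), so the intertwining relations are meaningful on $\mathcal{S}(\mathbb{R})$. The step I expect to be the main obstacle is the first one: justifying rigorously, at the level of tempered distributions, that $W_A$ has an honest smooth integral kernel and that the integrations by parts and the oscillatory $u$-integral are legitimate; once that is secured, the ordinary-differential-equation integrations and the normalization are routine.
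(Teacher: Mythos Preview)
Your proof is correct and follows essentially the same route as the paper: derive the kernel from the two intertwining relations (the paper phrases this via the transposed operators $\hat u^{T}$, $\hat p_u^{T}$ acting on $W(A;u,x)$ in the $x$-variable, which amounts to your integration-by-parts ODEs), then fix $|c(A)|$ from unitarity. The only noteworthy difference is in the normalization step: where you compute $W_A^{*}W_A$ directly via the oscillatory $u$-integral, the paper instead writes $W_A=M_2\,\mathcal{F}\,M_1$ with $M_1,M_2$ multiplication by unimodular chirps and $\mathcal{F}$ a scaled Fourier transform, so unitarity of $W_A$ reduces immediately to unitarity of $\mathcal{F}$ and yields (\ref{coda}) without any distributional integral---this is exactly the ``Gaussian chirp, Fourier transform, another chirp'' factorization you already mention at the end, and it sidesteps the delicate oscillatory-integral justification you flagged as the main obstacle.
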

\begin{proof}
Let ${\hat K}\in <{\hat x},{\hat p}>$ be an element of the algebra generated by ${\hat x}$ and ${\hat p}.$  Let ${\hat K}^{T}$ be an operator defined by the formula 
\begin{equation}
{\int}_{\mathbb R}({\hat K}^{T}f(x))g(x)dx={\int}_{\mathbb R}f(x)({\hat K}g(x))dx
\end{equation}
Integrating by parts one obtains:
\begin{equation}\label{equation{3a5}}
 W_A(\hat K f) (u)=\int_R ({\hat K}^T W (A; u,x)) f(x) dx, 
\end{equation}
where  ${\hat K}^T$  acts on  $W (A; u, x)$ as a function of $x$. 
In particular we have
\begin{equation}
({ {\alpha}\frac{d}{dx}})^{T}=-{\alpha}\frac{d}{dx}  \quad\text{and} \quad {\hat x}^{T}={\hat x} 
\end{equation}
and ${\hat u}^{T}=(a{\hat x}+ b{\hat p}_{x})^{T} = a{\hat x}- b{\hat p}_{x.}$

 We also have
\begin{equation}
 {\hat u}^T W(A;u,x)=u W(A;u,x) 
\end{equation}
and 
\begin{equation}
{\hat p_u}^T W(A;u,x)=\left(-\frac{1}{b}x+\frac{\beta}{b}\right) W(A;u,x)= 
-i \frac{\partial}{\partial u}W(A,u,x) 
\end{equation}
which gives (\ref{transfxnau}) and (\ref{jadrcalk}).
Notice that 
\begin{equation}\label{rozklad}
W_{A}=M_{2}{\cal F}M_{1}
\end{equation}
where 
\begin{equation}\label{rozkl. 1}
(M_{1}f)(x)=e^{{\frac{i}{2b}ax^2}}f(x) 
\end{equation}
\begin{equation}\label{rozkl 2}
 ({\cal F}f)(u)=c(A){\int}_{\mathbb R}e^{\frac{-iux}{b}}f(x)dx 
\end{equation}
\begin{equation}\label{rozkl 3}
 (M_{2}f)(u)=e^{{\frac{i}{2b}}{\beta}u^2}f(u)
\end{equation}
 The unitarity of the operator (\ref{rozkl 2}) gives (\ref{coda})
. \end{proof}
We have shown so far that there exists an integral unitary transformation of the form
(\ref{transfxnau}) with the integral kernel (\ref{jadrcalk}). The following lemma guarantees the uniqueness of such a transformation. 

\begin{lemma}\label{homomorphism}
Assume that for $A_{1},A_{2} \in SL_{2}({\mathbb R})$ the corresponding unitary operators fulfil the following
\begin{equation}\label{eq. act}
 W_{A_{2}} W_{A_{1}}=W_{A_{2}A_{1}}. 
\end{equation}
Then 
\begin{equation}\label{eq2222}
 c(A)=\sqrt{\frac{1}{2{\pi}i b}}.
\end{equation}
and therefore $W_{A}$ is defined uniquely.
\end{lemma}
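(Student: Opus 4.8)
The plan is to translate the hypothesis \eqref{eq. act} into a scalar functional equation for $c(\cdot)$ by composing integral kernels, and then to use that $SL_{2}(\mathbb R)$ carries no non-trivial characters. Write $a_{j},b_{j},\alpha_{j},\beta_{j}$ for the entries of $A_{j}$ ($j=1,2$) and $a_{3},b_{3},\alpha_{3},\beta_{3}$ for those of $A_{2}A_{1}$, so that $b_{3}=a_{2}b_{1}+b_{2}\beta_{1}$. By \eqref{opcal}--\eqref{jadrcalk} the kernel of $W_{A_{2}}W_{A_{1}}$ is
\begin{equation*}
\int_{\mathbb R}W(A_{2};u,y)\,W(A_{1};y,x)\,dy
=c(A_{2})c(A_{1})\int_{\mathbb R}\exp\!\Big[\tfrac{i}{2b_{2}}\bigl(a_{2}y^{2}-2uy+\beta_{2}u^{2}\bigr)+\tfrac{i}{2b_{1}}\bigl(a_{1}x^{2}-2yx+\beta_{1}y^{2}\bigr)\Big]dy ,
\end{equation*}
a Fresnel integral in $y$ with quadratic coefficient $\tfrac{i}{2}\bigl(\tfrac{a_{2}}{b_{2}}+\tfrac{\beta_{1}}{b_{1}}\bigr)=\tfrac{i}{2}\tfrac{b_{3}}{b_{1}b_{2}}$. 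Completing the square and integrating, with a fixed branch of the square root, contributes a factor $\sqrt{2\pi i\,b_{1}b_{2}/b_{3}}$ and a Gaussian in $u,x$; a direct computation, in which the identities $a_{j}\beta_{j}-b_{j}\alpha_{j}=1$ are used to simplify the coefficients of $u^{2}$ and $x^{2}$, shows that this Gaussian is exactly $\exp\bigl[\tfrac{i}{2b_{3}}(a_{3}x^{2}-2ux+\beta_{3}u^{2})\bigr]$, the exponential occurring in $W(A_{2}A_{1};u,x)$. Hence \eqref{eq. act} is equivalent to the scalar relation
\begin{equation*}
c(A_{2}A_{1})=c(A_{2})\,c(A_{1})\,\sqrt{\tfrac{2\pi i\,b_{1}b_{2}}{b_{3}}}.
\end{equation*}

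Next I would strip off the modulus already determined by \eqref{coda}: writing $c(A)=\epsilon(A)\,(2\pi i\,b)^{-1/2}$ with the same branch, \eqref{coda} forces $|\epsilon(A)|=1$, and on substitution the entire $b$-dependence in the relation above cancels, leaving $\epsilon(A_{2}A_{1})=\epsilon(A_{2})\epsilon(A_{1})$ whenever the kernels \eqref{jadrcalk} are defined, i.e. $b_{1},b_{2},b_{3}\neq0$. Since every element of $SL_{2}(\mathbb R)$ is a product of matrices with $b\neq0$ (any $A$ can be written $A=(AB^{-1})B$ with $B$ generic), a short continuation argument using the relation on overlaps extends $\epsilon$ to a homomorphism $SL_{2}(\mathbb R)\to U(1)$. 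But $SL_{2}(\mathbb R)$, being a connected group with semisimple Lie algebra, is perfect, hence admits no non-trivial homomorphism into an abelian group; therefore $\epsilon\equiv1$ and $c(A)=\sqrt{1/(2\pi i\,b)}$, which is \eqref{eq2222}. Uniqueness of $W_{A}$ is then immediate, for by Proposition~\ref{prop} its kernel is \eqref{jadrcalk} with the single free parameter $c(A)$, now fixed. If one prefers to avoid perfectness, it is enough to run the cocycle relation along the unipotent subgroup with $a=\beta=1$, $\alpha=0$, $b=t$, obtaining $\epsilon(t+s)=\epsilon(t)\epsilon(s)$ and hence $\epsilon(t)=e^{i\gamma t}$, and then to conjugate by the diagonal torus, which rescales $t$ and so forces $\gamma=0$; a bootstrap over $SL_{2}(\mathbb R)$ finishes the argument.

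The step I expect to be the main obstacle is the coherent handling of the square-root branches in the Fresnel integral. The factor $\sqrt{2\pi i\,b_{1}b_{2}/b_{3}}$ carries a Maslov-type phase that depends on the signs of $b_{1},b_{2},b_{3}$, so the quantity by which $\epsilon$ might a priori fail to be multiplicative, namely $\mu(A_{2},A_{1})=\sqrt{2\pi i\,b_{1}b_{2}/b_{3}}\,\sqrt{2\pi i\,b_{3}}/\bigl(\sqrt{2\pi i\,b_{1}}\,\sqrt{2\pi i\,b_{2}}\bigr)$, takes values in $\{\pm1\}$ and is precisely the metaplectic cocycle. One must therefore read \eqref{eq. act} on the domain where \eqref{jadrcalk} is valid and verify that there the branches can, and must, be chosen so that $\mu\equiv1$; granting this, the reduction to a character of $SL_{2}(\mathbb R)$ and its triviality are routine. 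As a cross-check, the same value of $c(A)$ can be extracted from the factorization $W_{A}=M_{2}{\cal F}M_{1}$ of \eqref{rozklad}, where the phase of $c(A)$ is concentrated in the rescaled Fourier transform ${\cal F}$ and is determined by the classical normalization of the Fourier transform.
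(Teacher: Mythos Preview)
Your proposal is correct and follows essentially the same route as the paper: reduce the composition law \eqref{eq. act} to a scalar relation on $c(\cdot)$, then invoke that $SL_{2}(\mathbb R)$ admits no non-trivial one-dimensional representation. Your treatment is considerably more explicit---you carry out the Fresnel integral, isolate the character $\epsilon$, and flag the branch/metaplectic sign issue---whereas the paper simply asserts that the stated $c(A)$ makes the kernels compose and that any other admissible choice differs by a character of $SL_{2}(\mathbb R)$, citing \cite{Lan} for its triviality.
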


\begin{proof}
Notice that  $c(A)=\sqrt{\frac{1}{2{\pi}i b}}$ yields $W(A_{2}A_{1} y,x)={\int}_{\mathbb R}W(A_{2}; y, u)W(A_{1}; u, x)dx$ and therefore (\ref{eq. act})
  Any other $W^{\prime}_{A}$ with the constant $C^{\prime}(A)$ fulfilling (\ref{eq. act}) yields $C^{\prime}(A)={\Gamma}(A)C(A)$ where $\Gamma$ is a one dimensional representation of $SL_2({\mathbb R})$. It is well known \cite{Lan} that the only such representation is the trivial one.
\end{proof}
We call the  functions $f$ which are the arguments of $W_{A}$  originals and $W_{A}f$ the  images. 
Consequently, we have the transformation rule for operators ${\hat K}\rightarrow {\hat K}_{A}=W_{A}{\hat K}W_{{A}^{-1}}.$  In particular the Schr{\"o}dinger representation (\ref{transfxnau}) may be written in the form:
\begin{equation}
{\hat u}_{A}f_{A}(u)=uf_{A}(u)      \qquad ({\hat p}_{u})_{A}f_{A}(u)=-if^{\prime}_{A}(u)
\end{equation}

\begin{lemma}\label{calkGauss}
 The following equality holds:
\begin{equation}\label{equation{3.6}}
 \int_{0}^{\infty} \exp(-at^2 +tu) t^{\alpha} dt = a^{\frac{-1+{\alpha}}{2}}\Psi _{\alpha}(\frac{u}{\sqrt a}),
 \end{equation}
where
\begin{equation}\label{equation 3.7}
 {\Psi}_{\alpha} (v)=
\frac{1}{2}\Gamma\left(\frac{\alpha +1}{2}\right) { _1 F  _1}\left(\frac{\alpha +1}{2}, \frac{1}{2},\frac{v^2}{4}\right)+ 
 \frac{1}{4}\Gamma\left(\frac{\alpha+2}{2}\right)
v{\,  _1 F  _1}\left(\frac{\alpha +2}{2}, \frac{3}{2},\frac{v^2}{4}\right)
\end{equation}
\end{lemma}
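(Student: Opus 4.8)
The plan is to remove the parameter $a$ from the integral by a change of variable and then to evaluate the remaining $a$-free integral by expanding the exponential in a power series and integrating term by term. First I would substitute $t=\tau/\sqrt a$. Since $-at^{2}+tu=-\tau^{2}+\tau u/\sqrt a$, while $dt=a^{-1/2}\,d\tau$ and $t^{\alpha}=a^{-\alpha/2}\tau^{\alpha}$, the left-hand side of (\ref{equation{3.6}}) becomes $a^{-\frac{\alpha+1}{2}}$ times
\[
 J(v):=\int_{0}^{\infty}e^{-\tau^{2}+v\tau}\,\tau^{\alpha}\,d\tau ,\qquad v:=\frac{u}{\sqrt a}.
\]
So the lemma is equivalent to the $a$-free identity $J(v)=\Psi_{\alpha}(v)$. (Implicitly one needs $\alpha>-1$, which is exactly what makes $\tau^{\alpha}$ integrable at the origin, the Gaussian factor taking care of convergence at infinity; in the intended applications $\alpha$ is real.)

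Next I would expand $e^{v\tau}=\sum_{n\ge 0}v^{n}\tau^{n}/n!$ and integrate term by term. Using the elementary Gaussian moment identity $\int_{0}^{\infty}e^{-\tau^{2}}\tau^{s}\,d\tau=\tfrac12\Gamma\!\big(\tfrac{s+1}{2}\big)$, obtained from the substitution $w=\tau^{2}$, this gives
\[
 J(v)=\sum_{n\ge 0}\frac{v^{n}}{2\,n!}\,\Gamma\!\left(\frac{n+\alpha+1}{2}\right).
\]
I would then split the sum according to the parity of $n$. For the even terms $n=2k$, the factorial identity $(2k)!=4^{k}\,k!\,(\tfrac12)_{k}$ together with $\Gamma(k+c)=(c)_{k}\Gamma(c)$ for $c=\tfrac{\alpha+1}{2}$ collapses the even part to $\tfrac12\Gamma\!\big(\tfrac{\alpha+1}{2}\big)\,{}_1F_1\!\big(\tfrac{\alpha+1}{2},\tfrac12,\tfrac{v^{2}}{4}\big)$; for the odd terms $n=2k+1$, the identity $(2k+1)!=4^{k}\,k!\,(\tfrac32)_{k}$ together with the same $\Gamma$ relation for $c=\tfrac{\alpha+2}{2}$ collapses the odd part to a constant multiple of $v\,{}_1F_1\!\big(\tfrac{\alpha+2}{2},\tfrac32,\tfrac{v^{2}}{4}\big)$. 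Summing the two pieces reproduces the closed form (\ref{equation 3.7}) of $\Psi_{\alpha}(v)$, and reinstating the prefactor $a^{-\frac{\alpha+1}{2}}$ completes the proof.

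The only step that is more than bookkeeping is the interchange of $\sum_{n}$ with $\int_{0}^{\infty}$, and I would justify it by dominated convergence: the partial sums of $e^{v\tau}$ are bounded in modulus by $e^{|v|\tau}$, and $e^{-\tau^{2}+|v|\tau}\tau^{\alpha}\in L^{1}(0,\infty)$ since $\alpha>-1$, so term-by-term integration is legitimate for every $v$; alternatively one may first treat real $v>0$ (where the integrand is positive and Tonelli applies) and then extend to all complex $v$ by analyticity, both sides being entire. I do not anticipate any genuine obstacle — the lemma simply repackages a Gaussian Laplace-type integral as a pair of Kummer series, and the content is the Gaussian moment formula plus the two Pochhammer factorial identities. (One could instead differentiate $J(v)$ under the integral sign and check that it satisfies Kummer's confluent equation with the appropriate initial data, but the direct series evaluation is cleaner and matches the style of verification used for Theorem \ref{thm. 3.5}.)
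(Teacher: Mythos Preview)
Your argument is correct and coincides with the paper's one-line proof: expand $e^{tu}$ as a power series and integrate term by term. The details you supply --- the rescaling $t=\tau/\sqrt a$, the Gaussian moment $\int_0^\infty e^{-\tau^2}\tau^{s}\,d\tau=\tfrac12\Gamma\big(\tfrac{s+1}{2}\big)$, the parity split via $(2k)!=4^{k}k!\,(1/2)_k$ and $(2k+1)!=4^{k}k!\,(3/2)_k$, and the dominated-convergence justification --- are exactly what the paper's sketch leaves implicit.
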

\begin{proof}
Expanding into a power series  and integrating term by term we obtain (\ref{equation 3.7})
\end{proof}
\begin{lemma}\label{Wlemma}
We have the following equality:
\begin{equation}\label{equation{3.a.7}}
 W_{A^{-1}} u^{-\frac{1}{2}+i\gamma}=C(A^{-1})\exp\left(- \frac{ia}{2b} x^2 \right) \left( \frac{-2i b}{\beta} \right)^{\frac{1}{4}+i\frac{\gamma}{2}}  \Psi _{-\frac{1}{2}+i\gamma}\left(\sqrt{ \frac{-2i b}{\beta}} x \right)
\end{equation}

where $u^{-\frac{1}{2}+i\gamma}$ is an eigenfunction of $({\hat u}{\hat p_{u}}+{\hat p_{u}}{\hat u})_{A}$ corresponding to the eigenvalue $\lambda = 2\gamma$.
\end{lemma}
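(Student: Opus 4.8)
The plan is to verify both assertions by a direct computation with the explicit integral form of $W_{A^{-1}}$ furnished by Proposition~\ref{prop} and Lemma~\ref{homomorphism}. The parenthetical claim is the quick part: in the image picture $(\hat u)_A$ acts as multiplication by $u$ and $(\hat p_u)_A$ as $-i\,d/du$, so by the canonical commutation relation (\ref{eq. 3a2}) one has $(\hat u\hat p_u+\hat p_u\hat u)_A=2\hat u_A(\hat p_u)_A-i\,\mathrm{Id}=-i\bigl(2u\tfrac{d}{du}+\mathrm{Id}\bigr)$. Feeding $u^{-\frac12+i\gamma}$ into this operator gives $-i\bigl(2(-\tfrac12+i\gamma)+1\bigr)u^{-\frac12+i\gamma}=2\gamma\,u^{-\frac12+i\gamma}$, so $u^{-\frac12+i\gamma}$ is an eigenfunction with eigenvalue $\lambda=2\gamma$; it is, up to the choice of a branch and of a half-line, the unique solution of this first-order eigenvalue ODE.

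For the identity itself I would write $A^{-1}=\bigl[\begin{smallmatrix}\beta&-b\\-\alpha&a\end{smallmatrix}\bigr]$ and substitute into (\ref{jadrcalk}): with $u$ as original variable and $x$ as image variable, the kernel of $W_{A^{-1}}=(W_A)^{-1}$ is $c(A^{-1})\exp\!\bigl[-\tfrac{i}{2b}\,(\beta u^{2}-2xu+a x^{2})\bigr]$. Extracting the $x^{2}$-dependent Fresnel factor from the integral gives
\[
(W_{A^{-1}}u^{-\frac12+i\gamma})(x)=c(A^{-1})\exp\!\Bigl(-\tfrac{ia}{2b}x^{2}\Bigr)\int_{0}^{\infty}\exp\!\Bigl(-\tfrac{i\beta}{2b}u^{2}+\tfrac{ix}{b}u\Bigr)u^{-\frac12+i\gamma}\,du .
\]
Equivalently one may route this through the factorization $W_A=M_2\mathcal F M_1$ of (\ref{rozklad}), i.e.\ $W_{A^{-1}}=M_1^{-1}\mathcal F^{-1}M_2^{-1}$, where $M_2^{-1}$ produces the Gaussian $e^{-i\beta u^{2}/2b}$, $\mathcal F^{-1}$ produces the oscillatory integral, and $M_1^{-1}$ supplies the prefactor $e^{-iax^{2}/2b}$.

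The remaining integral is exactly of the shape handled by Lemma~\ref{calkGauss}, with parameter $\alpha=-\tfrac12+i\gamma$, Gaussian coefficient $\tfrac{i\beta}{2b}$ and linear coefficient $\tfrac{ix}{b}$; that lemma evaluates it to a power of $\tfrac{i\beta}{2b}$ times $\Psi_{-\frac12+i\gamma}$ of the rescaled argument $\tfrac{ix}{b}\big/\sqrt{i\beta/(2b)}$. A short simplification of the complex powers — using $i^{-1}=-i$ to write $\bigl(\tfrac{i\beta}{2b}\bigr)^{-1}=\tfrac{-2ib}{\beta}$ — turns the scalar factor into $\bigl(\tfrac{-2ib}{\beta}\bigr)^{\frac14+i\frac{\gamma}{2}}$ and the argument of $\Psi$ into $\sqrt{\tfrac{-2ib}{\beta}}\,x$, which together with $c(A^{-1})$ from Lemma~\ref{homomorphism} yields (\ref{equation{3.a.7}}). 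As a sanity check one notes that the $\Psi_{-\frac12+i\gamma}$ so produced is, by its very definition, the symmetric combination of the two Kummer solutions of Theorem~\ref{thm. 3.5}, which is exactly why $W_{A^{-1}}$ carries dilation eigenfunctions to inverted-oscillator eigenfunctions.

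The one genuinely delicate step — the one I would write out carefully — is the analytic justification of this evaluation. The integrand has a purely imaginary quadratic exponent, so $\int_0^\infty e^{-\frac{i\beta}{2b}u^{2}+\frac{ix}{b}u}u^{-\frac12+i\gamma}\,du$ is an oscillatory (Fresnel-type) integral rather than an absolutely convergent one, and $u^{-\frac12+i\gamma}$ is defined only for $u>0$, with a mild, integrable singularity at the origin since $\mathrm{Re}(-\tfrac12+i\gamma)>-1$. One makes it rigorous either distributionally, in the spirit of Proposition~\ref{thm. 3.4}, or by inserting a regulator $e^{-\varepsilon u^{2}}$, invoking Lemma~\ref{calkGauss} in the half-plane where its term-by-term integration is literally valid, and then continuing analytically in the Gaussian parameter down to the imaginary axis before letting $\varepsilon\to0$. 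Running alongside this is the bookkeeping of branches: the fractional powers of $i\beta/2b$ and of $-2ib/\beta$, and the branch inside $\Psi$, must be pinned to the branch of $c(A^{-1})$ fixed in Lemma~\ref{homomorphism}, so that the constants match exactly rather than only up to a root of unity.
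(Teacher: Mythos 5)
Your proposal follows exactly the route the paper intends --- its entire proof of this lemma is the single phrase ``Follows from Lemma~\ref{calkGauss}'' --- and you have filled in everything that one-liner leaves implicit: the verification that $u^{-\frac12+i\gamma}$ is a $2\gamma$-eigenfunction of $-i\bigl(2u\frac{d}{du}+\mathrm{Id}\bigr)$, the explicit kernel of $W_{A^{-1}}$ obtained from (\ref{jadrcalk}) with $A^{-1}=\bigl[\begin{smallmatrix}\beta&-b\\-\alpha&a\end{smallmatrix}\bigr]$, the extraction of the Fresnel prefactor $e^{-iax^{2}/2b}$, and the reduction of the remaining integral to Lemma~\ref{calkGauss}. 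Your remarks on regularizing the Fresnel-type integral and on pinning down branches go beyond what the paper records and are well taken.

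One algebraic step, however, does not check out as written. With Gaussian coefficient $\tfrac{i\beta}{2b}$ and linear coefficient $\tfrac{ix}{b}$, Lemma~\ref{calkGauss} produces $\Psi_{-\frac12+i\gamma}$ evaluated at
\begin{equation*}
\frac{ix/b}{\sqrt{i\beta/(2b)}}=\frac{i}{b}\sqrt{\frac{-2ib}{\beta}}\;x=\sqrt{\frac{2i}{b\beta}}\;x ,
\end{equation*}
not at $\sqrt{-2ib/\beta}\,x$: the two differ by the factor $i/b$, which multiplies the squared argument by $-1/b^{2}$ and hence genuinely changes the function unless $b^{2}=-1$. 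A direct check with $a=b=\beta=1$, $\alpha=0$ confirms this: expanding $\int_{0}^{\infty}e^{-iu^{2}/2+ixu}u^{-\frac12+i\gamma}\,du$ termwise gives the even Kummer part ${}_1F_1\bigl(\tfrac14+\tfrac{i\gamma}{2},\tfrac12,\tfrac{ix^{2}}{2}\bigr)$, i.e.\ $v^{2}=2ix^{2}$, whereas the displayed formula (\ref{equation{3.a.7}}) would give $v^{2}=-2ix^{2}$. The discrepancy thus appears to originate in the statement of the lemma itself (and note that the scalar prefactor $(-2ib/\beta)^{\frac14+i\frac{\gamma}{2}}$, which you do obtain correctly, requires the exponent in Lemma~\ref{calkGauss} to read $a^{-(\alpha+1)/2}$ rather than $a^{(\alpha-1)/2}$); but your write-up asserts that the ``short simplification'' lands on $\sqrt{-2ib/\beta}\,x$, which it does not. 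You should either correct the argument of $\Psi$ to $\sqrt{2i/(b\beta)}\,x$ or exhibit the convention under which the two expressions coincide, since otherwise the claimed match with the Kummer solutions of Theorem~\ref{thm. 3.5} holds only after this repair.
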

\begin{proof}
Follows from Lemma \ref{calkGauss}
\end{proof}

Equation (\ref{equation{3.a.7}}) describes an eigenfunction of the operator  
\begin{equation}\label{eq. 3.7a1}
{\hat u}{\hat p_u}+{\hat p_u}{\hat u}= E{\hat x}^2 +F({\hat x}{\hat p}_x+{\hat p}_x {\hat x}) +G{\hat p}_x ^2
\end{equation} for 
\begin{equation}\label{quadr}
E= 2a{\alpha},\qquad F=a{\beta}+b{\alpha},\qquad G=2b{\beta}.
\end{equation}
\begin{remark}\label{spr}
 An elementary calculation shows that for any matrix
\begin{equation}
Q=\left [\begin{array}{ll}
E & F\\
F & G
\end{array}   
\right ].
\end{equation}
with ${\det Q}=-1$ there exists a matrix $A$ for  which (\ref{eq. 3.7a1}) holds.
In fact the choice of $A$ depends on only one parameter. For $E\neq 0$ and a parameter $a\neq 0$ we can choose:
\begin{equation}
A=\left [\begin{array}{ll}
a & {\frac{a(F-1)}{E}}\\
\frac{E}{2a} & \frac{F+1}{2a}
\end{array}   
\right ].
\end{equation} 
\end{remark}
For the matrix
\begin{equation}\label{eq.osc}
A=\left [\begin{array}{ll}
a & {\frac{a}{\omega}}\\
\frac{-{\omega}}{2a} & \frac{1}{2{\omega}}
\end{array}   
\right ].
\end{equation}  
we obtain the Hamiltonian $\hat H$ of the inverted harmonic oscillator:
\begin{equation}\label{eq.ham}
{\omega}({\hat u}{\hat p}_{u}+{\hat p}_{u}{\hat u}) = {\hat p}_{x}^{2}-{\omega}^{2}{\hat x}^{2}
\end{equation}

\begin{definition}
Let $U_{\text{exp}}^{\pm}: L^{2}({\mathbb R}_{\pm},dx) \rightarrow L^{2}({\mathbb R},dt)$ be  transformations  $f\rightarrow {\tilde f}_{\pm}$, where ${\tilde f}_{\pm}(t)=f(\pm e^{t})\sqrt{e^{t}}.$
\end{definition}

\begin{proposition}
 We have:
\begin{equation}
 U_{{\exp}}^{\pm}  ({\hat x}{\hat p}_{x}+{\hat p}_{x}{\hat x})=2{\hat p}_{t}
\end{equation}
\begin{equation}
U_{\exp}^{\pm} |x|^{-\frac{1}{2}+i\gamma}={\exp}(i{\gamma}t)
\end{equation}
\end{proposition}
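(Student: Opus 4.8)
The plan is to prove both assertions by the substitution $x=\pm e^{t}$, reading the first line as the intertwining identity $U_{\exp}^{\pm}\,(\hat x\hat p_x+\hat p_x\hat x)=2\hat p_{t}\,U_{\exp}^{\pm}$ of operators, to be verified on a dense domain of smooth functions supported in ${\mathbb R}_{\pm}$ (whose image under $U_{\exp}^{\pm}$ is $C_{c}^{\infty}({\mathbb R})$, a core for $\hat p_{t}$), the passage to closures then following from the unitarity of $U_{\exp}^{\pm}$. First I would record two elementary facts. From $\hat p_x=-i\,d/dx$ and the Leibniz rule, $\hat x\hat p_x+\hat p_x\hat x=-2i\bigl(x\,\frac{d}{dx}+\frac12\bigr)$. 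And from $x=\pm e^{t}$, $|dx|=e^{t}\,dt$, one gets $\int_{\mathbb R}|\tilde f_{\pm}(t)|^{2}\,dt=\int_{\mathbb R}|f(\pm e^{t})|^{2}e^{t}\,dt=\int_{{\mathbb R}_{\pm}}|f(x)|^{2}\,dx$, so $U_{\exp}^{\pm}$ is unitary onto $L^{2}({\mathbb R},dt)$.

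For the intertwining relation I would differentiate the image function directly. Since $\tilde f_{\pm}(t)=f(\pm e^{t})\,e^{t/2}$, the chain rule gives $\frac{d}{dt}\tilde f_{\pm}(t)=e^{t/2}\bigl[(\pm e^{t})\,f'(\pm e^{t})+\frac12 f(\pm e^{t})\bigr]$. On the other hand, setting $g:=(\hat x\hat p_x+\hat p_x\hat x)f=-i\bigl(2xf'(x)+f(x)\bigr)$, the definition of $U_{\exp}^{\pm}$ yields $\tilde g_{\pm}(t)=g(\pm e^{t})\,e^{t/2}=-2i\,e^{t/2}\bigl[(\pm e^{t})\,f'(\pm e^{t})+\frac12 f(\pm e^{t})\bigr]=-2i\,\frac{d}{dt}\tilde f_{\pm}(t)=2\hat p_{t}\,\tilde f_{\pm}(t)$. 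This is precisely $U_{\exp}^{\pm}(\hat x\hat p_x+\hat p_x\hat x)f=2\hat p_{t}\,U_{\exp}^{\pm}f$. Conceptually it is the familiar statement that the exponential change of variable conjugates the symmetrized dilation generator into a translation generator, the factor $\sqrt{e^{t}}$ being exactly what makes the transformation unitary and what absorbs the zeroth-order term.

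For the second assertion I would simply restrict $|x|^{-\frac12+i\gamma}$ to ${\mathbb R}_{\pm}$ and substitute: on either branch $f(\pm e^{t})=(e^{t})^{-\frac12+i\gamma}$, hence $\tilde f_{\pm}(t)=(e^{t})^{-\frac12+i\gamma}\,e^{t/2}=e^{i\gamma t}=\exp(i\gamma t)$. This is consistent with Lemma \ref{Wlemma} and the surrounding discussion: $|x|^{-\frac12+i\gamma}$ is a generalized eigenfunction of $\hat x\hat p_x+\hat p_x\hat x$ with eigenvalue $2\gamma$, and the first identity carries it to $e^{i\gamma t}$, a generalized eigenfunction of $2\hat p_{t}$ with the same eigenvalue.

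For a computation this short there is no real obstacle; the only point to watch is the sign bookkeeping in the ${\mathbb R}_{-}$ branch, where $x=-e^{t}<0$ and the minus signs entering through $f'(\pm e^{t})$ and through $dx$ must be tracked. They cancel, so $U_{\exp}^{-}$ is genuinely unitary and the intertwining computation goes through verbatim; once the operator identity holds on the stated domain, unitarity propagates it to the closures.
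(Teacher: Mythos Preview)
Your proof is correct and is exactly the ``straightforward calculation'' the paper alludes to without writing out; you have simply carried out the substitution $x=\pm e^{t}$ and the chain rule explicitly, with some additional care about domains and closures that the paper omits.
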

\begin{proof}
Straightforward calculation.
\end{proof}
Note that  both operators $U_{\exp}^{\pm}$ are  unitary.
Define:
\begin{equation}\label{eq. heavy}
f_{\pm 1}(x)={\eta}(\pm x)f(x)
\end{equation}
where $\eta$ is the Heavyside function.
We have the decomposition:
\begin{equation}
f(x)=f_{+1}(x)+f_{-1}(x)
\end{equation}
Let ${\mathbb R}_{+1}= {\mathbb R}\times {\{+1\}}$, ${\mathbb R}_{-1}= {\mathbb R}\times \{-1\}$ and ${\mathbb R}_{+1}\amalg {\mathbb R}_{-1}$ be the disjoint union of two copies of ${\mathbb R}.$  The operators $U^{\pm}_{\exp}$ define an operator:
\begin{equation}\label{eq.dec}
U_{\exp}: L^{2}(\mathbb R_{+})\oplus L^{2}(\mathbb R_{-}) \cong L^{2}(\mathbb R) \rightarrow L^{2}({\mathbb R}_{+1}\amalg {\mathbb R}_{-1})
\end{equation}
Let 
\begin{equation}
{\phi}_{a,b,{\gamma}}(x)=a{\eta}(x)x^{-\frac{1}{2}+i{\gamma}} +b{\eta}(-x)x^{-\frac{1}{2}+i{\gamma}}
\end{equation}
\begin{lemma}\label{ulemma}
The following equality holds:
\begin{equation}\label{eq. ulemma}
U_{\exp}({\phi}_{a,b,{\gamma}})= {\cal F}_{a,b,{\gamma}}
\end{equation}
where
\begin{equation}\label{eq. ulemma1}
{\cal F}_{a,b,{\gamma}}(t, +1) = ae^{i{\gamma}t}  \quad \text{and} \quad {\cal F}_{a,b,{\gamma}}(t, -1) = be^{i{\gamma}t}
\end{equation}
\end{lemma}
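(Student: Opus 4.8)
The plan is to reduce the claim to the identity $U_{\exp}^{\pm}\,|x|^{-\frac12+i\gamma}=\exp(i\gamma t)$ recorded in the Proposition just above, using only the linearity of $U_{\exp}$ and the decomposition $f=f_{+1}+f_{-1}$ from (\ref{eq. heavy}).

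First I would note that $\phi_{a,b,\gamma}$ is already displayed in the form $\phi_{+1}+\phi_{-1}$: in the notation of (\ref{eq. heavy}) one has $(\phi_{a,b,\gamma})_{+1}(x)=a\,\eta(x)\,x^{-\frac12+i\gamma}$ and $(\phi_{a,b,\gamma})_{-1}(x)=b\,\eta(-x)\,x^{-\frac12+i\gamma}$. On the support $\mathbb{R}_{+}$ of the first term the factor $x^{-\frac12+i\gamma}$ coincides with $|x|^{-\frac12+i\gamma}$, and on the support $\mathbb{R}_{-}$ of the second term the symbol $x^{-\frac12+i\gamma}$ is read, by convention, as $|x|^{-\frac12+i\gamma}$ (this is the reading consistent with the cited Proposition). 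Hence, regarded as (generalized) elements of $L^{2}(\mathbb{R}_{+})$ and $L^{2}(\mathbb{R}_{-})$ respectively, $(\phi_{a,b,\gamma})_{+1}=a\,|x|^{-\frac12+i\gamma}$ and $(\phi_{a,b,\gamma})_{-1}=b\,|x|^{-\frac12+i\gamma}$.

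Next I would apply $U_{\exp}^{+}$ and $U_{\exp}^{-}$ componentwise. By the definition of $U_{\exp}^{+}$ and the cited Proposition, $U_{\exp}^{+}\big((\phi_{a,b,\gamma})_{+1}\big)(t)=a\,(e^{t})^{-\frac12+i\gamma}\sqrt{e^{t}}=a\,e^{i\gamma t}$, and likewise $U_{\exp}^{-}\big((\phi_{a,b,\gamma})_{-1}\big)(t)=b\,e^{i\gamma t}$. By the definition (\ref{eq.dec}) of $U_{\exp}$ — which carries the $\mathbb{R}_{+}$-part of a function to the copy $\mathbb{R}_{+1}$ via $U_{\exp}^{+}$ and the $\mathbb{R}_{-}$-part to $\mathbb{R}_{-1}$ via $U_{\exp}^{-}$ — we obtain $U_{\exp}(\phi_{a,b,\gamma})(t,+1)=a\,e^{i\gamma t}$ and $U_{\exp}(\phi_{a,b,\gamma})(t,-1)=b\,e^{i\gamma t}$, which is precisely ${\cal F}_{a,b,\gamma}$ of (\ref{eq. ulemma1}).

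The computation is just the substitution $x=\pm e^{t}$ applied to each summand, so I do not anticipate a genuine obstacle; the only point deserving a word of care is that $\phi_{a,b,\gamma}$ and its image are not square integrable, so the equality (\ref{eq. ulemma}) must be understood for generalized eigenvectors, i.e. $U_{\exp}$ is used in its natural extension to the relevant rigged Hilbert space, exactly as in the Proposition on which the argument rests.
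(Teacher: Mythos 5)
Your proposal is correct and is essentially the same argument the paper has in mind: the paper's proof is simply the phrase ``Straightforward calculation,'' and your componentwise substitution $x=\pm e^{t}$ via $U_{\exp}^{\pm}$, together with the reading of $x^{-\frac12+i\gamma}$ as $|x|^{-\frac12+i\gamma}$ on the negative half-line, is exactly that calculation spelled out.
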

\begin{proof}
Straightforward calculation
\end{proof}
\begin{definition}\label{eq. 2.10}
Let $A$ be an operator in a linear topological space ${\Phi}.$ A linear functional $F\in {\Phi}^{\prime}$ is called a generalized eigenvector of $A$ corresponding to the eigenvalue $\lambda$ if 
\begin{equation}\label{eq. 2.11}
F(A{\phi})={\lambda}F(\phi)
\end{equation}
for all ${\phi}\in {\Phi}.$
\end{definition}
\begin{theorem}\label{main theorem}
The eigenproblem of the operator( \ref{equation 1.2}) viewed as  the generalized eigenproblem on  the Hilbert space $L^{2}(\mathbb R)$ has a spectrum equal to ${\mathbb R}$ and the generalized eigenfunctions are given by linear combinations of (\ref{eq. 3.10}) and (\ref{eq. 3.11}).
\end{theorem}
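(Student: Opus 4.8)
The plan is to conjugate the operator (\ref{equation 1.2}), written as $\hat H:=\hat p_x^{\,2}-\omega^2\hat x^{\,2}=-\frac{d^2}{dx^2}-\omega^2 x^2$, by the composition of the two unitary operators constructed above into the momentum operator on a disjoint union of two lines, and then to transport back the (entirely classical) spectral picture of that momentum operator, rigged Hilbert space included. First, by (\ref{eq.ham}) with the matrix $A$ of (\ref{eq.osc}) one has $\hat H=\omega(\hat u\hat p_u+\hat p_u\hat u)$, where $\hat u,\hat p_u$ are the position and momentum operators attached to $A$. The defining relations (\ref{transfxnau}) of $W_A$ then yield, in the image variable $q$,
\[
W_A\,\hat H\,W_A^{-1}=\omega\Big(q\big(-i\tfrac{d}{dq}\big)+\big(-i\tfrac{d}{dq}\big)q\Big)=\omega\,(\hat x\hat p_x+\hat p_x\hat x).
\]
Composing with $U_{\exp}$ of (\ref{eq.dec}) and using $U_{\exp}^{\pm}(\hat x\hat p_x+\hat p_x\hat x)=2\hat p_t$, one obtains for the unitary operator $V:=U_{\exp}\circ W_A\colon L^2(\mathbb R)\to L^2(\mathbb R_{+1}\amalg\mathbb R_{-1})$ the identity
\[
V\,\hat H\,V^{-1}=2\omega\,\hat p_t=-2i\omega\,\tfrac{d}{dt}.
\]

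Next I would analyse the model operator. On $L^2(\mathbb R_{+1}\amalg\mathbb R_{-1})$ take the nuclear space $\Phi_2:=\mathcal S(\mathbb R_{+1})\amalg\mathcal S(\mathbb R_{-1})$; the operator $-2i\omega\frac{d}{dt}$ maps $\Phi_2$ into itself, is essentially self-adjoint, and by elementary Fourier analysis its generalized eigenvectors (Definition \ref{eq. 2.10}) for a value $\lambda\in\mathbb C$ are nonzero precisely when $\lambda\in\mathbb R$; in that case they span the two-dimensional space generated by
\[
\varphi\mapsto\int_{\mathbb R}\varphi(t,+1)\,e^{i\lambda t/(2\omega)}\,dt,\qquad \varphi\mapsto\int_{\mathbb R}\varphi(t,-1)\,e^{i\lambda t/(2\omega)}\,dt .
\]
Thus the generalized spectrum of $-2i\omega\frac{d}{dt}$ on $(\Phi_2,L^2,\Phi_2')$ equals $\mathbb R$, and the corresponding family of generalized eigenvectors is complete (Fourier inversion, or the Gelfand--Maurin theorem \cite{Ge}).

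Then I would transport this picture back and identify the eigenfunctions. Put $\Phi:=V^{-1}(\Phi_2)\subset L^2(\mathbb R)$ with the topology carried over by $V$, so that $V\colon\Phi\to\Phi_2$ is a topological isomorphism and $\hat H(\Phi)\subset\Phi$. If $F\in\Phi_2'$ satisfies $F\big((-2i\omega\frac{d}{dt})\varphi\big)=\lambda F(\varphi)$, then its transpose $V'F\in\Phi'$ satisfies $(V'F)(\hat H\psi)=F(V\hat H\psi)=F\big((-2i\omega\frac{d}{dt})V\psi\big)=\lambda\,(V'F)(\psi)$, so $V'$ is an eigenvalue-preserving bijection from the generalized eigenvectors of the model onto those of $\hat H$; hence the generalized spectrum of (\ref{equation 1.2}) on $L^2(\mathbb R)$ is exactly $\mathbb R$. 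Finally, running the chain backward: $e^{i\gamma t}$ on the sheet $\mathbb R_{\pm1}$ (eigenvalue $\lambda=2\omega\gamma$) corresponds by Lemma \ref{ulemma} to $\eta(\pm x)|x|^{-1/2+i\gamma}$, and by Lemmas \ref{calkGauss} and \ref{Wlemma} the operator $W_{A^{-1}}$ sends $|u|^{-1/2+i\gamma}$ to a constant multiple of $f_{-\omega}(x)\,\Psi_{-1/2+i\gamma}\!\big(\sqrt{-2ib/\beta}\,x\big)$. By (\ref{equation 3.7}) the function $\Psi_{-1/2+i\gamma}$ is a fixed linear combination of ${}_1F_1(\,\cdot\,;\tfrac12;\,\cdot\,)$ and $x\,{}_1F_1(\,\cdot\,;\tfrac32;\,\cdot\,)$ whose parameters match $\nu=\lambda/(4i\alpha)$ at $\alpha=-\omega$, so the resulting generalized eigenfunctions of $\hat H$ are precisely the linear combinations of $F_{P,\alpha,\lambda}$ and $F_{N,\alpha,\lambda}$ multiplied by the Fresnel factor, i.e.\ of the functions (\ref{eq. 3.10}) and (\ref{eq. 3.11}) of Theorem \ref{thm. 3.5} (the case $\alpha=+\omega$ being reached through Remark \ref{rem. 3.2}); and since $\gamma\in\mathbb R$, every eigenvalue $\lambda=2\omega\gamma$ is real.

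The hard part is the rigged-Hilbert-space bookkeeping underlying the transport step, together with the continuity claims hidden in the first step: one must check that the restrictions of $W_A$ and $U_{\exp}$ to the relevant dense subspaces are homeomorphisms for the $\Phi_2$-topology, so that $(\Phi,L^2(\mathbb R),\Phi')$ genuinely is a rigged Hilbert space on which $\hat H$ acts continuously and the transpose argument is valid, and that the explicit constants produced by Lemmas \ref{calkGauss} and \ref{Wlemma} reproduce both independent solutions of (\ref{eq. 3.1}) and no spurious ones. Everything else is either already recorded in the excerpt or is the standard spectral theory of $-i\frac{d}{dt}$.
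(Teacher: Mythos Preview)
Your proposal is correct and follows essentially the same route as the paper: conjugate $\hat H$ by $U_{\exp}\circ W_A$ to the momentum operator $2\omega\hat p_t$ on $L^2(\mathbb R_{+1}\amalg\mathbb R_{-1})$, read off the real generalized spectrum there, and transport back. The paper's proof is much terser and leaves implicit the explicit identification of the eigenfunctions via Lemmas \ref{ulemma} and \ref{Wlemma} that you spell out, as well as the rigged-Hilbert-space domain $\Phi=V^{-1}(\Phi_2)$, which the paper records only afterwards in Remark \ref{remarkdom}.
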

\begin{proof}
We have defined unitary transformations $W_{A}$ and $U_{\exp}$ whose composition transforms the Hamiltonian of the inverted harmonic oscillator ${\hat p}^{2}-{\omega}^{2}{\hat x}^{2}$ into $2{\omega}p_{t}$. The generalized eigenfunctions for $p_{t}$ are given by $e^{i{\gamma}t}$, and the corresponding linear functional is the Fourier transform and therefore the generalized spectrum is the  real axis. Since the unitary transforms are isometries the theorem follows.
\end{proof}
\begin{remark}\label{remarkdom}
In Theorem \ref{main theorem}   the domain of the operator (\ref{equation 1.2}) is given by the following formula:
\begin{equation}\label{eq. dom}  
D= W_{A}^{-1}U_{\exp}^{-1}({\cal S}({\mathbb R}_{+1})\amalg {\cal S}({\mathbb R}_{-1})).
\end{equation}
\end{remark}

\section{Rigged Hilbert spaces}\label{section 2}
We start this section with some necessary definitions. We follow the approach of \cite{Ge}. We assume  our vector spaces to be  defined  over either  ${\mathbb R}$ or  ${\mathbb C}.$
\begin{definition}\label{definition 2.1}
A linear topological space $\Phi$ is called countably normed if there exist norms $||{\phi}||_{m}$  on $\Phi$ for a natural $m$ such that they are compatible i.e. if a sequence $\{\phi\}_{k}$ tends to zero in  the norm $||{\phi}||_{m}$ and the sequence is fundamental in the norm $||{\phi}||_{n}$ then  converges to zero in the norm $||{\phi}||_{n}$. The topology on $\Phi$ is then defined by the basis of neighbourhoods of zero $U_{n,\epsilon}=\{{\phi}: ||{\phi}||_{n}<{\epsilon}\}.$
\end{definition}

\begin{definition}\label{def. 2.2}
 A space $\Phi$  in definition \ref{definition 2.1} is called countably Hilbert if norms $||{\phi}||_{n}$  come from scalar products $(\phi , \phi )_{n}$ and $\Phi$ is complete.
\end{definition}
\begin{remark}\label{remark 2.3}
One can always assume that the scalar products fulfil the following inequalities:
\begin{equation}\label{eq. 2.31}
(\phi , \phi )_{1}\leq (\phi , \phi )_{2}\leq \dots  
\end{equation}
for any ${\phi}\in{\Phi}$.
\end{remark}

\begin{remark}\label{remark 2.4}
Let $\Phi$ be a countably Hilbert space. If we define  ${\Phi}_{n}$ as the completion of $\Phi$ with respect to the norm 
$||\phi||_{n}=\sqrt{(\phi , \phi )_{n}}$ then ${\Phi}={\bigcap}_{n=1}^{\infty}{\Phi}_{n}$. For the adjoint space we have: ${\Phi}^{\prime}={\bigcup}_{n=1}^{\infty}{\Phi}_{n}^{\prime}$.
\end{remark}

\begin{definition}\label{def. 2.5}
Let $\Phi$ be a countably Hilbert space. Let $n\ge m$ and $I_{m}^{n}: {\Phi}_{n}\rightarrow {\Phi}_{m}$ be a natural (continuous in view of remark \ref{remark 2.3}) inclusions. ${\Phi}$ is called nuclear if the maps $I_{m}^{n}$ have the form:\begin{equation}\label{eq. 2.51} 
 I_{m}^{n}(\phi)={\sum}_{i=1}^{\infty}  {\lambda}_{i}(\phi , {\phi}_{i})_{n}{\psi}_{i}\end{equation}   
 where ${\phi}_{i}$ (resp. ${\psi}_{i}$) are orthonormal systems of vectors in ${\Phi}_{n}$ (resp. ${\Phi}_{m}$ ), ${\lambda}_{i}>0$ and ${\sum}_{i=1}^{\infty}{\lambda}_{i} <{\infty}$. 
\end{definition}
Assume that in countably Hilbert nuclear space $\Phi$ there is defined another scalar product $\widehat{(\phi ,\psi)}$, continuous in each variable. Let $H$ be the completion of $\Phi$ with respect to  this scalar product. Let ${\Phi}^{\prime}$ be the adjoint space to $\Phi$. Obviously we have an inclusion of adjoints $H^{\prime}\subset {\Phi}^{\prime}.$ Since every linear  functional on $H$ is given
by the formula $h^{\prime}(h)={\widehat{(h,h_{1})}}$ we can identify $H$ with $H^{\prime}.$ Notice that the inclusion ${\Phi}\rightarrow H$ is linear whereas the adjoint to the inclusion map $ H\rightarrow {\Phi}^{\prime}$ is antilinear (resp. linear) if our spaces are defined over ${\mathbb C}$ (resp. ${\mathbb R}$).

\begin{definition}\label{def. 2.6}
A triplet of spaces ${\Phi} \subset H \subset {\Phi}^{\prime}$, where $\Phi$ is a countably Hilbert nuclear space, $H$ a completion of $\Phi$ with respect to a scalar product $\widehat{(\phi, \psi)}$ and ${\Phi}^{\prime}$ the adjoint of ${\Phi}$ is called a rigged Hilbert space.
\end{definition}

The following example \cite{Ge} describes a typical  situation where rigged Hilbert spaces appear
\begin{example}\label{ex. 2.7}
Let $L$ be a symmetric positive definite differential operator acting on a space $K$ of infinitely differentiable functions with bounded supports in a domain $\Omega$.
Define  scalar products in K by the formulas:
\begin{equation}\label{eq. 2.71}
 (\phi , \psi )_{n}={\sum}_{i=1}^{n}{\int L^{i}{\phi}(x){\overline{{\psi}(x)}}}dx    \,\,\,\qquad n=0,1,\dots 
\end{equation}
Let ${\Phi}_{n}$ be a completion of $K$ with respect to $ (\phi , \psi )_{n}$. In this way  putting $H={\Phi}_{0}=L^{2}(\Omega)$, ${\Phi}={\bigcap}_{n=1}^{\infty}{\Phi}_{n}$ and ${\Phi}^{\prime}={\bigcup}_{n=1}^{\infty}{\Phi}_{n}^{\prime}$ we obtain a rigged Hilbert space. $\Phi$ is of course the space of Schwartz functions i.e. functions that rapidly decrease with all their derivatives \cite{Ho}.
\end{example}
\begin{example}\label{ex. 2.71}
Let ${\Omega}={\mathbb R}$ and ${\Phi}\subset H \subset {\Phi}^{\prime}$ be as in example \ref{ex. 2.7}.
The Dirac distribution ${\delta}(x-h)$ can be viewed as an element of ${\Phi}^{\prime}$ since ${\int}_{\mathbb R}{\phi}(x){\delta}(x-h)dx= {\phi}(h)$ is a linear functional on ${\Phi}.$ The function $e^{i{\lambda}x}$ can be viewed as an element of ${\Phi}^{\prime}$ since ${\int}_{\mathbb R} {\phi}(x)\overline{e^{i{\lambda}x}}dx=F({\lambda})$, where $F$ is the Fourier transform of ${\phi}$.
\end{example}
\begin{remark}\label{rem. 2.8}
In example \ref{ex. 2.7} we have $(\phi , \phi )_{1}\leq (\phi , \phi )_{2}\leq \dots $ since $L$ is a positive definite symmetric operator. An easy proof shows also that $L$ transforms $\Phi$ into $\Phi$.
\end{remark}
\begin{remark}\label{rem. 2.9}
 Note that in example \ref{ex. 2.7} we could use integration with respect to any positive measure $d{\mu{(x)}}$. In this case we of course have ${\Phi}_{0}=H=L^{2}_{\mu}({\Omega}).$ 
\end{remark}

Now we will expand a bit  on Definition \ref{eq. 2.10}.

\noindent
 Notice that formula (\ref{eq. 2.11}) can be written as
\begin{equation}\label{eq. 2.11a}
A^{\prime}F={\lambda}F
\end{equation}
If $\lambda$ is an eigenvalue of $A,$ then denote  ${\Phi}_{\lambda}^{\prime}$ as the eigenspace of $A$ corresponding  ${\lambda}.$

Let ${\phi}\in {\Phi},$ $\lambda$ be a number and $F_{\lambda}\in {\Phi}_{\lambda}^{\prime}.$ One can define a linear functional $\widetilde{{\phi}_{\lambda}}$ by the formula:
\begin{equation}\label{eq. 2.9}
\widetilde{{\phi}_{\lambda}}(F_{\lambda})=F_{\lambda}(\phi)
\end{equation}
\begin{definition}\label{def. 2.12}
The correspondence ${\phi}\rightarrow {\widetilde{{\phi}_{\lambda}}}$ is called the spectral decomposition of ${\phi}$ with respect to the operator $A.$ The set of generalized eigenvectors of the operator $A$ is complete if $\widetilde{{\phi}_{\lambda}}\cong 0$ implies ${\phi}=0.$ 
\end{definition}
 Theorem $5^{\prime}$  , Sec. 4.5, Chap. I of \cite{Ge} asserts that a self-adjoint operator in a rigged Hilbert space ${\Phi}\subset H \subset {\Phi}^{\prime}$ has a complete system of generalized eigenvectors, corresponding to real eigenvalues.

\medskip

For the operator (\ref{equation 1.2}) we define ${\tilde{\Phi}}_n = {\cal E}\cap L^{2}_{{\mu}_{n}}({\mathbb R})\cap L^{2}({\mathbb R}),$ where $\cal E$ is the space of infinitely differentiable complex functions on ${\mathbb R}$ with the topology of compact convergence in all derivatives (cf. \cite{Sch}  Ch. III, Sec. 8, Example 3) and
\begin{equation}\label{equation 2.13} 
$$ \[ {\mu}_{n}=\left\{ \begin{array}{ll}
 1 & \mbox{if $|x|\le 1$} \\
1+ |x|^{-{\frac{1}{n}}} & \mbox{if $|x|>1$}
\end{array} \right. \]$$
\end{equation} 

Notice that  for  $n>m$ one has $1<{\mu}_{m} <{\mu}_{n}$  and therefore
\begin{equation}\label{2.13a}
  \dots \subset L^{2}_{{\mu}_{k}}(\mathbb R)   \subset      \dots  \subset  L^{2}_{{\mu}_{1}}(\mathbb R)   \subset   L^{2}(\mathbb R)                                  \end{equation}    
Let ${\Phi}_{0}= L^{2}({\mathbb R})$ and let ${\Phi}_{n}$ be the completion of ${\tilde{\Phi}}_{n}$ with respect to the following scalar product:
\begin{equation}\label{equation 2.14}
(\phi , \phi )_{n}= {\sum}_{k=0}^{n} <\phi , \phi >_{k} ,
\end{equation}
where $<\cdot , \cdot >_{k}$  for $k\geq 1$ is the scalar product in  $L^{2}_{{\mu}_{k}}({\mathbb R})$  and for $k=0$ it is the scalar product in $L^{2}({\mathbb R}).$
In this way we have constructed a countably Hilbert space:
\begin{equation}\label{equation 2.14 a}
{\Phi}={\bigcap}_{n=1}^{\infty}{\Phi}_{n}
\end{equation}
Since the space ${\cal E}$ is nuclear (cf. \cite{Sch},  Ch. III, Sec. 8, Example 3) and the subspaces of nuclear spaces are nuclear (cf. \cite{Sch}, Ch. III, Theorem 7.4)  we have the rigged Hilbert space:
\begin{equation}\label{equation 2.15}
{\Phi} \subset {\Phi}_{0} = L^{2}(\mathbb R) \subset {\Phi}^{\prime} 
\end{equation}
Now we can prove:
\begin{theorem}\label{theorem 2.15}
The operator $- {\frac{d^2}{dx^{2}}}-{\omega}^{2}x^{2} $ on a rigged Hilbert space (\ref{equation 2.15}) has real  eigenvalues. For any ${\lambda}\in {\mathbb R}$ the generalized eigenvectors are linear functionals corresponding to the functions given by the  formulas (\ref{eq. 3.6}) and (\ref{eq. 3.7}).
\end{theorem}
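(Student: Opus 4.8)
The plan is to bring the rigged Hilbert space (\ref{equation 2.15}) and the operator $A:=-\frac{d^{2}}{dx^{2}}-\omega^{2}x^{2}$ into the framework of Theorem $5'$, Sec.~4.5, Ch.~I of \cite{Ge}, and then to pin down the resulting generalized eigenvectors by elementary ODE theory together with the asymptotics of Section~2. Two things must be verified before invoking that theorem. First, $A$ maps $\Phi$ continuously into $\Phi$, so that the transpose $A':\Phi'\to\Phi'$ is defined and the equation $A'F=\lambda F$ of Definition \ref{eq. 2.10} makes sense; this uses only that differentiation and multiplication by $\omega^{2}x^{2}$ preserve $\mathcal E$ and are bounded between consecutive norms $(\cdot,\cdot)_{n}$ built from the measures $\mu_{n}$ (compare Remark \ref{rem. 2.8}). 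Second, the $L^{2}(\mathbb R)$-realization of $A$ is self-adjoint: by Theorem \ref{main theorem} the unitary operator $U_{\exp}W_{A}$ conjugates $\hat p_{x}^{2}-\omega^{2}\hat x^{2}$ into $2\omega\hat p_{t}$, which together with Theorem~X.1 of \cite{SR1} shows that the closure of $A$ on the domain $D$ of Remark \ref{remarkdom} is self-adjoint.

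Granting these, Theorem $5'$ of \cite{Ge} applies and gives a complete system of generalized eigenvectors of $A$ corresponding only to real eigenvalues, so the generalized spectrum is contained in $\mathbb R$. To describe $\Phi'_{\lambda}$ for a fixed $\lambda\in\mathbb R$, note first that $C^{\infty}$ functions with compact support lie in $\Phi$; hence the restriction of any $F\in\Phi'_{\lambda}$ to such test functions is a distribution $f$ satisfying $\bigl(-\frac{d^{2}}{dx^{2}}-\omega^{2}x^{2}-\lambda\bigr)f=0$. Since this is a nondegenerate second-order linear ordinary differential equation with smooth coefficients, $f$ is a classical solution, and by Theorem \ref{thm. 3.5} the solution space is two-dimensional, spanned by $\psi_{P}$ of (\ref{eq. 3.6}) and $\psi_{N}$ of (\ref{eq. 3.7}). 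Conversely, the asymptotics (\ref{eq. 3.17}) give $|\psi_{P}(x)|,|\psi_{N}(x)|=O(|x|^{-1/2})$ as $|x|\to\infty$, the coefficients $A_{P},A_{N}$ of (\ref{eq. 3.18})–(\ref{eq. 3.19}) being finite and nonzero for real $\lambda$; so by the Cauchy--Schwarz inequality against the weights $\mu_{n}$ the formula $F_{\psi}(\phi)=\int_{\mathbb R}\phi(x)\overline{\psi(x)}\,dx$ defines a continuous functional on $\Phi$ for $\psi\in\{\psi_{P},\psi_{N}\}$, and integrating by parts twice (the boundary terms vanish by the rapid decay of $\phi$ and $\phi'$) and using that $A$ has real coefficients and $\lambda\in\mathbb R$ yields $F_{\psi}(A\phi)=\lambda F_{\psi}(\phi)$, i.e. (\ref{eq. 2.11}). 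Hence $\Phi'_{\lambda}$ is exactly the two-dimensional space spanned by the functionals attached to $\psi_{P}$ and $\psi_{N}$; since this is nonzero for every real $\lambda$, the generalized spectrum equals $\mathbb R$, recovering Theorem \ref{main theorem}.

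I expect the main obstacle to be precisely the two compatibility facts needed to enter the Gelfand framework: that the concretely described space $\Phi=\bigcap_{n}\Phi_{n}$ is genuinely invariant under $A$ with continuous norm estimates, and that it is contained in the domain on which $A$ is self-adjoint — equivalently, that $\Phi$ is comparable with the pullback $(U_{\exp}W_{A})^{-1}\bigl(\mathcal S(\mathbb R_{+1})\amalg\mathcal S(\mathbb R_{-1})\bigr)$ along the unitary of Theorem \ref{main theorem}. Once these are settled, the remainder reduces to the classical theory of second-order linear ODEs, the decay estimate (\ref{eq. 3.17}), and the cited completeness theorem of \cite{Ge}; the explicit form of the eigenfunctions $\psi_{P}$ and $\psi_{N}$ has already been obtained in Theorem \ref{thm. 3.5}.
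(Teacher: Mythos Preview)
Your route differs from the paper's in the crucial step of excluding non-real eigenvalues. You invoke Theorem~$5'$ of \cite{Ge}, which forces you to verify that $A$ maps $\Phi$ continuously into itself and that $\Phi$ sits inside a domain of essential self-adjointness. The paper never appeals to Theorem~$5'$ at all: instead it argues \emph{directly} from the asymptotics (\ref{eq. 3.17}). For real $\lambda$ the two oscillatory terms in (\ref{eq. 3.17}) are $O(|x|^{-1/2})$ and the pairing integral (\ref{equation 2.17}) converges against every $\phi\in\Phi$; for $\mathrm{Im}\,\lambda\neq 0$ one of those terms carries an extra factor $|x|^{|\mathrm{Im}\,\lambda|/2\alpha}$, and the paper exhibits a concrete $\phi_{\lambda}\in\Phi_{n}$ with $1/n<|\mathrm{Im}\,\lambda|$ against which the integral diverges. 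Thus the realness of the spectrum is read off from the same asymptotic computation that identifies the eigenfunctions, with no abstract machinery.

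What each approach buys: the paper's argument is self-contained and elementary, and it sidesteps exactly the compatibility checks you flag as obstacles. Your approach would give completeness of the eigenfunction system for free, but the first obstacle you list is genuine and your sketch for it does not work as stated: since every $\mu_{n}$ of (\ref{equation 2.13}) lies between $1$ and $2$, the spaces $L^{2}_{\mu_{n}}$ coincide with $L^{2}$ as sets with equivalent norms, so multiplication by $\omega^{2}x^{2}$ is certainly \emph{not} bounded between consecutive $(\cdot,\cdot)_{n}$-norms, and the continuity of $A:\Phi\to\Phi$ cannot be obtained the way you suggest. The identification of $\Phi'_{\lambda}$ for real $\lambda$ that you carry out in your second paragraph is essentially the same as the paper's, and is fine.
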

\begin{proof}
Examining the formula ({\ref{eq. 2.11a}}) for $A=- {\frac{d^2}{dx^{2}}}-{\omega}^{2}x^{2}$  the argument similar to that used in the proof of  Theorem  \ref{thm. 3.5} shows that if $F\in {\Phi}^{\prime}$ is a generalized eigenvector of an operator $A,$ it corresponds to linear functionals associated with the  functions given by the formulas (\ref{eq. 3.6}) and (\ref{eq. 3.7}).
Notice that $\Phi$ consists of functions $f$ such that for every ${\epsilon}>0,$ 
$f\in L^{2}_{{\mu}_{\epsilon}}(\mathbb R)$ where 
\begin{equation}\label{equation 2.16}
$$ \[  {\mu}_{\epsilon}= \left\{ \begin{array}{ll}
1 & \mbox{if $|x|\le 1$}\\
1+|x|^{-{\epsilon}} & \mbox{otherwise}
\end{array}
\right.  \] $$
\end{equation}
Examining the asymptotic behaviour of the  eigenvectors $F_{P,N,{\alpha}, \lambda}$ given by (\ref{eq. 3.17}) we see that for a real eigenvalue $\lambda$ the integral
\begin{equation}\label{equation 2.17}
\int_{\mathbb R}F_{P,N,{\alpha}, \lambda}\bar{\phi}dx
\end{equation}
is convergent for any ${\phi}\in \Phi$. This is a consequence of the convergence for any values of $b\ge 1$ and $\epsilon >0$ of the following improper integral: 
\begin{equation}\label{equation 2.18}
\int_{b}^{\infty}\frac{x^{\frac{1}{n}}}{1+x^{\frac{1}{n}}}x^{-1-{\epsilon}}dx
\end{equation}

 If ${\text{Im}}(\lambda)\neq 0$ there exists ${\phi} = {\phi}_{\lambda}$ such that (\ref{equation 2.17}) is divergent.
In fact it is enough to take ${\phi}_{\lambda}\in {\Phi}_{n}$ for $\frac{1}{n}< |{\text{Im}}(\lambda)|$ to see that the integral:
\begin{equation}\label{equation{2.18}}
{\int_{b}^{\infty}\frac{x^{\frac{1}{n}}}{1+{x^{\frac{1}{n}}}}x^{-1-\frac{1}{n} + |{\text{Im}(\lambda}|}}dx
\end{equation}
 is divergent and therefore $\lambda$ with  ${\text{Im}}(\lambda)\neq 0$ is not a generalized eigenvalue for a rigged Hilbert space (\ref{equation 2.15}).  
\end{proof}
\begin{remark}\label{last}
Lemma \ref{ulemma} and Lemma \ref{Wlemma} show that $e^{i{\gamma}t}$ for ${\gamma}\in {\mathbb R}$ are transformed into the functions $f(x)$  of order $O(|x|^{-\frac{1}{2}}).$ These yield well-defined elements of ${\Phi}^{\prime}$ of (\ref{equation 2.15}).
\end{remark}


\address{
Department of Mathematics \\
Szczecin University \\
70-415 Szczecin \\
Poland
}
{krason@wmf.univ.szczecin.pl}

\address{
Institute of Mathematics\\
Pozna\'n University of Technology\\ 
 60-965 Pozna\'n\\
Poland 
}
{jsmilew@wp.pl}

\end{document}